\newsavebox\myboxA
\newsavebox\myboxB
\newlength\mylenA
\newcommand*\xoverline[2][0.75]{%
    \sbox{\myboxA}{$\m@th#2$}%
    \setbox\myboxB\null
    \ht\myboxB=\ht\myboxA%
    \dp\myboxB=\dp\myboxA%
    \wd\myboxB=#1\wd\myboxA
    \sbox\myboxB{$\m@th\overline{\copy\myboxB}$}
    \setlength\mylenA{\the\wd\myboxA}
    \addtolength\mylenA{-\the\wd\myboxB}%
    \ifdim\wd\myboxB<\wd\myboxA%
       \rlap{\hskip 0.5\mylenA\usebox\myboxB}{\usebox\myboxA}%
    \else
        \hskip -0.5\mylenA\rlap{\usebox\myboxA}{\hskip 0.5\mylenA\usebox\myboxB}%
    \fi}
\newcommand{\spara}[1]{\smallskip\noindent{\bf #1}}
\newtheorem{problem}{Problem}
\DeclareMathOperator*{\argmin}{arg\,min}
\newcommand{\bigO}{\ensuremath{\mathcal{O}}\xspace}
\newcommand{\lap}{\ensuremath{L}}
\newcommand{\lapi}{\ensuremath{L^{(i)}}}
\newcommand{\lapR}{\ensuremath{L^{R}}}
\newcommand{\posE}{\ensuremath{E^+}}
\newcommand{\negE}{\ensuremath{E^-}}
\newcommand{\graph}{\ensuremath{G}}
\newcommand{\rankvec}{\ensuremath{\vec{r}}}
\newcommand{\mbs}{{\small MBS}\xspace}
\newcommand{\mbse}{{\small MBS-EDGE}\xspace}
\newcommand{\fpt}{{\small FPT}\xspace}
\newcommand{\sdp}{{\small SDP}\xspace}
\newcommand{\tribes}{{\sc HighlandTribes}}
\newcommand{\cloister}{{\sc Cloister}}
\newcommand{\congress}{{\sc Congress}}
\newcommand{\bitcoin}{{\sc Bitcoin}}
\newcommand{\referendum}{{\sc TwitterReferendum}}
\newcommand{\election}{{\sc WikiElections}}
\newcommand{\slashdot}{{\sc Slashdot}}
\newcommand{\epinions}{{\sc Epinions}}
\newcommand{\conflict}{{\sc WikiConflict}}
\newcommand{\politics}{{\sc WikiPolitics}}
\newcommand{\ouralgo}{{\sc Timbal}\xspace}
\newcommand{\randprepro}{{\sc Subsample}\xspace}
\newcommand{\ggmz}{{\sc Ggmz}\xspace}
\newcommand{\grasp}{{\sc Grasp}\xspace}
\newcommand{\eige}{{\sc Eigen}\xspace}
\newcommand{\NP}{\ensuremath{\mathbf{NP}}\xspace}
\newcommand{\NPhard}{{\NP-hard}\xspace}
\renewcommand{\vec}[1]{\mathbf{#1}}
\newcommand{\sign}[1]{ \text{sign}(#1)}
\newcommand{\tridist}{1cm}
\newcommand{\squishlist}{
 \begin{list}{$\bullet$}
  {  \setlength{\itemsep}{0pt}
     \setlength{\parsep}{1pt}
     \setlength{\topsep}{1pt}
     \setlength{\partopsep}{0pt}
     \setlength{\leftmargin}{1.5em}
     \setlength{\labelwidth}{1em}
     \setlength{\labelsep}{0.5em}
} }
\newcommand{\squishlisttight}{
 \begin{list}{$\bullet$}
  { \setlength{\itemsep}{0pt}
    \setlength{\parsep}{0pt}
    \setlength{\topsep}{0pt}
    \setlength{\partopsep}{0pt}
    \setlength{\leftmargin}{2em}
    \setlength{\labelwidth}{1.5em}
    \setlength{\labelsep}{0.5em}
} }
\newcommand{\squishdesc}{
 \begin{list}{}
  {  \setlength{\itemsep}{0pt}
     \setlength{\parsep}{3pt}
     \setlength{\topsep}{3pt}
     \setlength{\partopsep}{0pt}
     \setlength{\leftmargin}{1em}
     \setlength{\labelwidth}{1.5em}
     \setlength{\labelsep}{0.5em}
} }
\newcommand{\squishend}{
  \end{list}
}
  \providecommand\BibTeX{{%
    \normalfont B\kern-0.5em{\scshape i\kern-0.25em b}\kern-0.8em\TeX}}}
\begin{document}

\title{Finding large balanced subgraphs in signed networks}

\author{Bruno Ordozgoiti}
\affiliation{\institution{Aalto University}}
\email{bruno.ordozgoiti@aalto.fi}

\author{Antonis Matakos}
\affiliation{\institution{Aalto University}}
\email{antonis.matakos@aalto.fi}

\author{Aristides Gionis}
\authornote{This work was done while the author was with Aalto University.}
\affiliation{\institution{KTH Royal Institute of Technology}}
\email{argioni@kth.se}

\renewcommand{\shortauthors}{Ordozgoiti, Matakos \& Gionis.}

\begin{abstract}
Signed networks are graphs whose edges are labelled with either a positive or a negative sign, and can be used to capture nuances in interactions that are missed by their unsigned counterparts.
The concept of balance in signed graph theory determines whether a network can be partitioned into two perfectly opposing subsets, and is therefore useful for modelling phenomena such as the existence of polarized communities in social networks. While determining whether a graph is balanced is easy, finding a large balanced subgraph is hard. The few heuristics available in the literature for this purpose are either ineffective or non-scalable. In this paper we propose an efficient algorithm for finding large balanced subgraphs in signed networks. The algorithm relies on signed spectral theory and a novel bound for perturbations of the graph Laplacian. In a wide variety of experiments on real-world data we show that our algorithm can find balanced subgraphs much larger than those detected by existing methods, and in addition, it is faster. We test its scalability on graphs of up to 34 million edges.
\end{abstract}

\begin{CCSXML}
<ccs2012>
   <concept>
       <concept_id>10003752.10003809.10003635</concept_id>
       <concept_desc>Theory of computation~Graph algorithms analysis</concept_desc>
       <concept_significance>500</concept_significance>
       </concept>
 </ccs2012>
\end{CCSXML}

\ccsdesc[500]{Theory of computation~Graph algorithms analysis}

\keywords{graph mining, signed graphs, dense subgraph, community detection}

\maketitle

\section{Introduction}

Social-media platforms have taken hold as one of the main forms of communication in today's society. Despite having served to facilitate connections between individuals, in recent years we have observed an array of negative phenomena associated to these technologies. Among other, these platforms seem to contribute to the polarization of political deliberation, which can be detrimental to the health of democracy. Thus, the study of methods to detect and mitigate polarization in online debates is becoming an increasingly compelling topic 
\cite{mejova2014controversy,liao2014can,vydiswaran2015overcoming,morales2015measuring,garimella2017reducing,garimella2018quantifying}.

Many social-media platforms can be represented by graphs. Thus, graph theory has found a variety of applications in this domain over the last few decades, such as community detection \cite{fortunato2010community}, partitioning \cite{arora2009expander}, and recommendation \cite{page1999pagerank}. One limitation of the graph representations usually employed in the literature is that they can capture the existence, or even the strength, of connections between vertices, 
but not their disposition. For instance, in a social network, vertices may represent people and edges interactions between them. By relying just on this information we cannot know whether each interaction is friendly or hostile. 


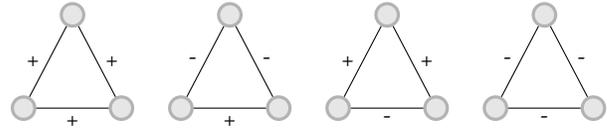
\begin{figure}[t]
  \begin{center}    
    \begin{tikzpicture}[
        roundnode/.style={circle, draw=gray!60, fill=gray!20, very thick, minimum size=7mm},
      ]
      \node[roundnode, minimum size=.2cm]      (a1)                              {};
      \node[roundnode, minimum size=.2cm]      (a2)       [below left = \tridist and 0.4\tridist of a1] {};
      \node[roundnode, minimum size=.2cm]      (a3)       [below right = \tridist and 0.4\tridist of a1] {};

      \draw (a1) -- (a2) node[left, pos=0.5]{+};
      \draw (a1) -- (a3) node[right, pos=0.5]{+};
      \draw (a2) -- (a3) node[below, pos=0.5]{+};

      \node[roundnode, minimum size=.2cm]      (b1)       [right=1.75\tridist of a1]                      {};
      \node[roundnode, minimum size=.2cm]      (b2)       [below left = \tridist and 0.4\tridist of b1] {};
      \node[roundnode, minimum size=.2cm]      (b3)       [below right = \tridist and 0.4\tridist of b1] {};

      \draw (b1) -- (b2) node[left, pos=0.5]{-};
      \draw (b1) -- (b3) node[right, pos=0.5]{-};
      \draw (b2) -- (b3) node[below, pos=0.5]{+};

      \node[roundnode, minimum size=.2cm]      (c1)       [right=1.75\tridist of b1]                      {};
      \node[roundnode, minimum size=.2cm]      (c2)       [below left = \tridist and 0.4\tridist of c1] {};
      \node[roundnode, minimum size=.2cm]      (c3)       [below right = \tridist and 0.4\tridist of c1] {};

      \draw (c1) -- (c2) node[left, pos=0.5]{+};
      \draw (c1) -- (c3) node[right, pos=0.5]{+};
      \draw (c2) -- (c3) node[below, pos=0.5]{-};

      \node[roundnode, minimum size=.2cm]      (d1)       [right=1.75\tridist of c1]                      {};
      \node[roundnode, minimum size=.2cm]      (d2)       [below left = \tridist and 0.4\tridist of d1] {};
      \node[roundnode, minimum size=.2cm]      (d3)       [below right = \tridist and 0.4\tridist of d1] {};

      \draw (d1) -- (d2) node[left, pos=0.5]{-};
      \draw (d1) -- (d3) node[right, pos=0.5]{-};
      \draw (d2) -- (d3) node[below, pos=0.5]{-};

    \end{tikzpicture}    
    \caption{The four possible signed triangles. The two on the left are balanced, while the two on the right are not.}
    \label{fig:triangles}
  \end{center}
\end{figure}

Signed graphs can be used to overcome this limitation. In signed graphs, each edge is labeled with either a positive or negative sign. If a graph represents social interactions, signs can be employed to determine whether these interactions are friendly or not. Thus, signed graphs constitute a good representation for detecting polarized groups in online debates.
%
Signed graphs were first introduced by Harary to study the concept of {\em balance}~\cite{harary1953notion}. A signed graph is said to be balanced if its vertices can be partitioned into two sets in perfect agreement with the edge signs; that is, every edge within each set is positive and every edge between the two sets is negative. Equivalently, a signed graph is balanced when the product of the signs of every cycle is positive. This is analogous to the commonplace notion ``the friend of a friend is a friend,'' ``the enemy of a friend is an enemy,'' etc., as illustrated in Fig. \ref{fig:triangles}.
A substantial body of work has been devoted to studying the spectral properties of signed graphs, which have strong connections to the concept of balance. In particular, the spectrum of the Laplacian matrix of a signed graph reveals whether it is balanced \cite{kunegis2010spectral}. Graphs found in real applications are often not balanced, and therefore the question of finding a balanced subgraph arises naturally.
The problem of finding a \emph{maximum balanced subgraph} be formulated in terms of vertex cardinality (\mbs) or edge cardinality (\mbse). 
Both formulations lead to \NPhard problems, 
thus, the development of efficient heuristics to approximately solve this problem is well motivated.

In this paper we present an algorithm to find large balanced subgraphs in signed networks. The algorithm works in two stages. First, we rely on spectral theory, as well as on a novel bound for perturbations of the Laplacian, to develop a greedy method to remove vertices and uncover a balanced subgraph. Then, any removed vertices that do not violate the balance of the located structure are restored. We derive analytical properties that allow us to efficiently implement the algorithm. Finally, we devise a random sampling strategy to significantly enhance the scalability of the method.

In a variety of experiments on real-world and synthetic data, we show that the proposed algorithm finds larger balanced subgraphs than alternative heuristics from the literature, both in terms of vertex and edge count. Furthermore, the proposed algorithm has the desirable properties that 
(i) it runs faster than any other competing method, 
(ii) it can be tuned to trade off running time and solution quality, and 
(iii) produces a vertex-removal sequence, which can be used to trade off balance and graph size. We validate the scalability of the method by testing it on graphs of up to 34 million edges.

\noindent
Our contributions are summarized as follows:
\squishlist
\item We propose an algorithm for finding large balanced subgraphs in signed graphs, based on spectral theory and its connections to balance.
\item We give an upper bound for the smallest Laplacian eigenvalue after removing a set of vertices, which allows us to efficiently trim the graph to find a balanced subgraph.
\item We experimentally show that our algorithm finds subgraphs much larger than those found by state-of-the-art methods.
\item We devise a random sampling strategy to significantly enhance the scalability of the method, and show empirically that the quality of the output is not affected.
\squishend
The rest of this paper is structured as follows. We discuss related work in Section \ref{section:related}. In Section \ref{sec:preliminaries} we introduce our notation and relevant notions. In Section \ref{sec:algorithm} we describe our algorithm in detail and discuss relevant considerations, and in Section \ref{sec:experiments} we show our experimental results. Finally, Section \ref{sec:conclusion} is a short conclusion.

\section{Related work}
\label{section:related}

\spara{Signed graphs and balance theory.}
Signed graphs were first studied by Harary,
who was interested in the notion of {\em balance}~\cite{harary1953notion}.
In 1956, Cartwright and Harary generalized Heider's psychological theory
of balance in triangles of sentiments to the theory of balance in
signed graphs~\cite{cartwright1956structural}.
Early work on signed graphs focused mainly on properties related to balance theory.
For example, Harary and Kabell develop a simple linear-time algorithm
to test whether a given signed graph satisfies the {\em balance property}~\cite{harary1980simple};
while Akiyama et al.~\cite{akiyama1981balancing} study properties
of the minimum number of sign changes required so that a signed graph
satisfies the balance property.

A more recent line of work
develops spectral properties of signed graphs,
still related to the balance theory.
Hou et al.~\cite{hou2003laplacian} prove that a connected signed graph is balanced
if and only if the smallest eigenvalue of the Laplacian is~0.
Hou~\cite{hou2005bounds} also investigates the relationship between the smallest eigenvalue of the Laplacian
and the unbalancedness of a signed graph.

\spara{Maximum balanced subgraphs.}
The problem studied in this paper is to find a maximum balanced subgraph (\mbs) 
in a given signed graph. 
Poljak and Turz{\'\i}k~\cite{poljak1986polynomial} show
that every connected signed graph with $n$ vertices and $m$ edges has a balanced
subgraph with at least $\frac{m}{2} + \frac{n-1}{4}$ edges, and this bound is tight. 
They give an algorithm to find such a subgraph that requires at least $\bigO(n^3)$ computations.
Notice that this algorithm gives a 2-approximation for the \mbse\ problem, 
but it is not practical for large graphs.
The \mbse\ problem can be formulated as a {\sc Signed\-Max\-Cut} problem, 
which is a generalization of the standard {\sc Max\-Cut} problem, and thus, \NP-hard. 
To obtain an exact solution, the problem has been studied in the context of 
{\em fixed-parameter tractability} (\fpt).
H{\"u}ffner et al.~\cite{huffner2007optimal}
propose an \fpt algorithm for deciding whether the maximum balanced subgraph has size at least $m-k$, 
where $k$ is the parameter.
Motivated by the lower bound of Poljak and Turz{\'\i}k, Crowston et al.~\cite{crowston2013maximum}
give an \fpt algorithm for deciding whether the maximum balanced subgraph has at least 
$\frac{m}{2} + \frac{n-1}{4} + \frac{k}{4}$ edges, where $k$ is the parameter.
These algorithms are not practical, as their running time is exponetial in $k$
and the degree of the polynomial in $n$ is large.

The \mbse\ problem has also been considered in application-driven studies, 
and different heuristics have been proposed.
DasGupta et al.~\cite{dasgupta2007algorithmic} consider an edge-deletion formulation of the \mbse\ problem
in the context of biological networks. 
Motivated by the {\sc Max\-Cut} connection, 
the authors develop an algorithm based on semidefinite programming relaxation (\sdp);
the approach, however, is not scalable and tested only on very small networks.
Figueiredo and Frota~\cite{figueiredo2014maximum} ask to find a balanced subgraph that
maximizes the number of vertices. 
They propose and experiment with a branch-and-cut exact approach, 
a heuristic based on minimum-spanning tree computation, and a heuristic combining a greedy algorithm and local-search.
We experimentally compare the proposed method with these heuristics in our empirical evaluation. 

\spara{Community detection in signed graphs.}
Different approaches have been proposed for community detection in signed graphs, 
some of which try to incorporate balance theory. 
Anchuri et al.~\cite{anchuri2012communities}
propose a spectral approach to partition a signed graph 
into a number of non-overlapping balanced communities.
Yang et al.~\cite{yang2007community} propose a random-walk-based approach
for partitioning a signed graph into communities, 
where in addition to positive edges within clusters and negative edges across clusters, 
they also want to maximize cluster densities.
Doreian and Mrvar~\cite{doreian1996partitioning} propose an algorithm 
for partitioning a signed \emph{directed} graph so as to minimize a measure of \emph{imbalance}.
The approach is evaluated only on very small networks.
Signed directed graphs are also considered by Lo et al.~\cite{lo2011mining}, 
who search for strongly-connected positive subgraphs that are negative bi-cliques. 
Chu et al.~\cite{chu2016finding} propose a constrained-programming objective to find
$k$~\emph{warring factions}, as well as an efficient algorithm to find local optima. 
Bonchi et al.\ formulate the problem of finding subgraphs in signed networks that are dense 
but allow for imperfect balance~\cite{bonchi2019discovering}. 
Several other methods have been proposed for identifying communities in signed graphs, 
some of which incorporating notions related to balance. 
A detailed survey on those methods is provided by Tang et al.~\cite{tang2016survey}.
The main difference of our work in comparison with all these approaches
is that they are mainly interested in communities or graph partitioning, 
which are different than the \mbs problem.

\section{Preliminaries}
\label{sec:preliminaries}
Before describing the proposed algorithm, we introduce our notation and review some basic results from the literature.

We consider an undirected simple signed graph $\graph=(V,\posE,\negE)$ where $V=\{1, \dots, n\}$ is the set of vertices and $\posE$ (respectively, $\negE$) is the set of positive (respectively, negative) edges. We sometimes simplify this notation and write $\graph=(V,E)$, where $E=\posE \cup \negE$.
Throughout this paper we denote vectors with boldface letters ($\vec{v}$) and matrices with uppercase letters ($A$). We use $\vec{v}_i$ to denote the $i$-th entry of a vector $\vec{v}$, and $A_{ij}$ to denote the element in the $i$-th row and $j$-th column of matrix $A$.
Given a signed graph, we define its adjacency matrix $A$ as follows: $A_{ij}=1$ if $\{i,j\}\in \posE$, $A_{ij}=-1$ if $\{i,j\}\in \negE$ and 0 otherwise. Further, we define the diagonal degree matrix $D$ as $D_{ii}=d(i)$, where $d(i)$ is the degree of vertex $i$, i.e., the number of edges (either positive or negative) adjacent to $i$. The \emph{signed Laplacian} of $\graph$ is defined as $\lap(\graph)=D-A$. We also refer to this matrix simply as \emph{Laplacian}, and will denote $L=L(G)$ when there is no ambiguity. Given a set of vertices $S$ such that $S\subseteq V$, $G\setminus S$ denotes the graph that results from removing from $G$ the vertices in $S$, as well as all adjacent edges.

We now define the concept of \textit{balance in signed networks}, which is central to our paper.

\begin{definition}[Balanced graph]
Given a connected signed graph $\graph=(V,\posE,\negE)$, $\graph$ is balanced if there exists a partition $V=V_1\cup V_2$, $V_1\cap V_2=\emptyset$ such that every edge with both endpoints in $V_1$ is positive, every edge with both endpoints in $V_2$ is positive, and every edge with one endpoint in $V_1$ and the other in $V_2$ is negative.
\end{definition}
In other words, a graph is balanced if we can divide it into two sets in a way that every edge sign agrees with the partition. For instance, if the vertices in graph $\graph$ represent users in a social network and the edges interactions between them (friendly or hostile, depending on the sign), a dense, balanced graph suggests that there are two polarized communities.

It is easy to decide whether a given signed graph is balanced.
In addition to a simple combinatorial algorithm, 
there is also an interesting characterization of balanced graphs based on the spectrum of the signed Laplacian. 
This is shown by the following result, which is key in the derivation of our algorithm.
\begin{theorem}[\cite{hou2003laplacian}]
  \label{the:balanced}
Consider a connected signed graph $\graph=(V,\posE,\negE)$, with signed Laplacian $\lap$. Let $\lambda_1(\lap)\leq \dots, \leq\lambda_n(\lap)$ be the eigenvalues of $\lap$. Then $\graph$ is balanced if and only if $\lambda_1(\lap)=0$.
\end{theorem}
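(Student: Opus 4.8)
The plan is to combine the variational (Rayleigh-quotient) characterization of $\lambda_1(\lap)$ with an explicit edge-by-edge formula for the Laplacian quadratic form. First I would record the identity
\[
  \vec{x}^{\top}\lap\,\vec{x} \;=\; \sum_{\{i,j\}\in\posE}(\vec{x}_i-\vec{x}_j)^2 \;+\; \sum_{\{i,j\}\in\negE}(\vec{x}_i+\vec{x}_j)^2,
\]
which follows by writing $\lap=D-A$, splitting $\vec{x}^{\top}D\vec{x}=\sum_{\{i,j\}\in E}(\vec{x}_i^2+\vec{x}_j^2)$ over edges, and using that $A_{ij}=+1$ on $\posE$ and $A_{ij}=-1$ on $\negE$ to complete the squares. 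Two consequences are immediate: $\lap$ is positive semidefinite, so $\lambda_1(\lap)\ge 0$ always; and $\lambda_1(\lap)=0$ holds precisely when the right-hand side vanishes for some nonzero $\vec{x}$, equivalently when some nonzero $\vec{x}$ lies in the kernel of $\lap$.

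For the ``only if'' direction, suppose $\graph$ is balanced and let $V=V_1\cup V_2$ be a witnessing partition. Define $\vec{x}\in\{+1,-1\}^n$ by $\vec{x}_i=+1$ for $i\in V_1$ and $\vec{x}_i=-1$ for $i\in V_2$. Every positive edge has both endpoints on the same side, so $\vec{x}_i-\vec{x}_j=0$; every negative edge crosses, so $\vec{x}_i+\vec{x}_j=0$. Hence the right-hand side of the identity is $0$ for this nonzero $\vec{x}$, so $\vec{x}^{\top}\lap\,\vec{x}=0$ and therefore $\lambda_1(\lap)=0$.

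For the ``if'' direction, suppose $\lambda_1(\lap)=0$ and pick an eigenvector $\vec{x}\neq\vec{0}$ with $\lap\,\vec{x}=\vec{0}$, so $\vec{x}^{\top}\lap\,\vec{x}=0$. By the identity each squared summand must be zero, i.e. $\vec{x}_i=\vec{x}_j$ along every edge of $\posE$ and $\vec{x}_i=-\vec{x}_j$ along every edge of $\negE$. Propagating these equalities along a path between any two vertices and using that $\graph$ is connected shows $\abs{\vec{x}_i}$ is constant over $V$; since $\vec{x}\neq\vec{0}$ this constant is nonzero, and after rescaling we may assume $\vec{x}\in\{+1,-1\}^n$. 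Setting $V_1=\{i:\vec{x}_i=1\}$ and $V_2=\{i:\vec{x}_i=-1\}$ gives a partition in which, by the sign constraints just derived, every positive edge lies inside one part and every negative edge goes between the parts; that is exactly the definition of balanced.

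The only real content is the quadratic-form identity and the connectivity-based propagation step; neither is deep, so I do not anticipate a genuine obstacle. The point requiring the most care is the propagation argument in the ``if'' direction — in particular, ruling out that $\vec{x}$ has a zero entry (again a direct consequence of connectedness, since a zero entry would force a zero on a whole connected component) and checking that the resulting $\pm1$ labelling is consistent on every edge.
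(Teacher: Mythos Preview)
Your proof is correct and is the standard argument. Note, however, that the paper does not supply its own proof of this theorem: it is stated with a citation to Hou et al.\ and used as a black box, so there is no in-paper proof to compare against. The route you take---the edge-wise quadratic-form identity for the signed Laplacian, positive semidefiniteness, and the connectivity propagation to force a $\pm1$ kernel vector---is exactly the argument one finds in the cited source, so nothing is missing or unexpected.
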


The smallest eigenvalue of the Laplacian reveals not only whether a graph is balanced, but also how far it is from being balanced. This is established by the following result of Li and Li~\cite{li2016note}.
\begin{theorem}[\cite{li2016note}]
  \label{the:how_balanced}
  Consider a connected signed graph $\graph=(V,\posE,\negE)$, with signed Laplacian $\lap$. Let $\lambda_1(\lap)\leq \dots, \leq\lambda_n(\lap)$ be the eigenvalues of $\lap$. Then
  \[
  \lambda_1(\lap)\leq \min_{G'}\{\lambda_n(L(G')):V_{G'}\subseteq V, G\setminus V_{G'} \text{ is balanced}\}.
  \]
\end{theorem}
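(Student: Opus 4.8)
The plan is to prove the inequality from the variational (Courant--Fischer / Rayleigh-quotient) characterisation of the extreme eigenvalues, reading $L(G')$ as the principal submatrix of $L$ on the rows and columns indexed by $V_{G'}$ (equivalently, the degree matrix $D$ restricted to $V_{G'}$ minus the adjacency matrix of the induced subgraph $G[V_{G'}]$, the degrees being those of $G$). It suffices to establish $\lambda_1(L)\le\lambda_n(L(G'))$ for a single feasible $G'$ and then take the minimum; the feasible set is nonempty, since deleting all but one vertex of $G$ leaves a (trivially balanced) single vertex.

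Fix such a $G'$, write $S:=V_{G'}$, and let $\mathbf v\in\mathbb{R}^{S}$ be a unit eigenvector of $L(G')$ for its largest eigenvalue, so that $\mathbf v^{\top}L(G')\mathbf v=\lambda_n(L(G'))$. Pad $\mathbf v$ with zeros on $V\setminus S$ to obtain $\hat{\mathbf v}\in\mathbb{R}^{n}$ with $\|\hat{\mathbf v}\|=1$. Since $\hat{\mathbf v}$ vanishes off $S$,
\[
\hat{\mathbf v}^{\top}L\,\hat{\mathbf v}=\sum_{i,j\in S}L_{ij}\,\mathbf v_i\mathbf v_j=\mathbf v^{\top}L(G')\,\mathbf v,
\]
where the last equality is exactly the statement that $L(G')$ is the principal submatrix of $L$ on $S$. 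Consequently, using $\lambda_1(L)=\min_{\mathbf x\ne\mathbf 0}\mathbf x^{\top}L\mathbf x/\mathbf x^{\top}\mathbf x$ with the test vector $\hat{\mathbf v}$,
\[
\lambda_1(L)\le\frac{\hat{\mathbf v}^{\top}L\,\hat{\mathbf v}}{\hat{\mathbf v}^{\top}\hat{\mathbf v}}=\mathbf v^{\top}L(G')\,\mathbf v=\lambda_n(L(G')),
\]
and minimising over all feasible $G'$ gives the claim. (Equivalently, this is Cauchy's interlacing theorem for the $|S|\times|S|$ principal submatrix $L(G')$ of the symmetric matrix $L$, which in fact yields the slightly stronger $\lambda_1(L)\le\lambda_1(L(G'))$.)

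The point I would handle most carefully is the convention for $L(G')$: the identity $\hat{\mathbf v}^{\top}L\hat{\mathbf v}=\mathbf v^{\top}L(G')\mathbf v$ requires $L(G')$ to be the principal submatrix of $L$ — so that an edge with exactly one endpoint in $S$ enters only through the diagonal — rather than the intrinsic Laplacian $D_{G[S]}-A_{G[S]}$ of the induced subgraph; under the latter reading the right-hand side would drop to $0$ as soon as some single vertex can be deleted to balance $G$, and the inequality would be false. The hypothesis that $G\setminus V_{G'}$ be balanced is what makes the bound meaningful as a quantitative measure of how far $G$ is from balanced, but it is not used in the inequality itself. A variant proof that does exploit it — first switch $G$ so that $G\setminus V_{G'}$ becomes all-positive, then test $L$ against the vector that is a top eigenvector of $L(G')$ on $V_{G'}$ glued to a scaled all-ones vector on $V\setminus V_{G'}$ — is possible, but one must then control the cross terms between the two blocks, so it is strictly more work than plain interlacing.
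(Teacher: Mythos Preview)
The paper does not actually prove this theorem; it is quoted from Li and Li \cite{li2016note} and stated without argument, so there is no in-paper proof to compare against.

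That said, your interlacing argument is correct and is the standard route. Padding a top eigenvector of the principal submatrix $L(G')$ with zeros and plugging it into the Rayleigh quotient for $L$ gives $\lambda_1(L)\le\lambda_n(L(G'))$ immediately, and as you note, Cauchy interlacing in fact yields the sharper $\lambda_1(L)\le\lambda_1(L(G'))$ for \emph{every} principal submatrix, so the balance hypothesis on $G\setminus V_{G'}$ is not used in the inequality at all---it only makes the right-hand side a sensible measure of distance to balance.

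Your most useful observation is the one about conventions: the argument requires $L(G')$ to be the principal submatrix of $L$ (diagonal entries equal to the $G$-degrees), not the intrinsic signed Laplacian of the induced subgraph $G[V_{G'}]$. The paper's remark immediately after the theorem, that $\lambda_n(L(G'))\le 2\Delta(G')$, is consistent with the principal-submatrix reading provided $\Delta(G')$ is interpreted as the maximum $G$-degree among the vertices of $V_{G'}$: Gershgorin then gives $\lambda_n(L(G'))\le \max_{i\in V_{G'}}\bigl(d_G(i)+d_{G[V_{G'}]}(i)\bigr)\le 2\max_{i\in V_{G'}}d_G(i)$. Under the intrinsic-Laplacian reading, as you point out, a single removable vertex would force $\lambda_1(L)\le 0$, which is absurd for unbalanced $G$.
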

Here, $V_{G'}$ denotes the set of vertices of graph $G'$.
Intuitively, Theorem~\ref{the:how_balanced} says that if we can make $G$ balanced with just a small modification, then $\lambda_1(\lap)$ is small. Note that  $\lambda_n(L(G'))\leq 2\Delta(G')$, where $\Delta(G')$ denotes the maximum degree of the vertices of $G'$.

Graphs found in practical applications are usually not balanced. The question that arises naturally is thus whether we can find the maximum balanced subgraph (\mbs) of a given signed graph efficiently. We formalize this objective next.

\begin{problem}[\mbs]
\label{prob:mbs}
  Given a signed graph $G=(V,E)$, find the graph $G'$ induced by $V'\subseteq V$ such that $G'$ is balanced and the cardinality of\ $V'$\ is maximized.
\end{problem}
A solution to Problem \ref{prob:mbs} would reveal the frustration number, that is, the minimum number of vertices to remove to make the graph balanced, and is thus \NPhard \cite{zaslavsky2012mathematical}. In this paper we approach this problem based on Theorems \ref{the:balanced} and \ref{the:how_balanced}. In particular, we address the following question: what vertices should we remove from $G$ so that the minimum eigenvalue of the resulting graph is as small as possible? This question inspires the algorithm to find balanced subgraphs described in the next section.

\section{Algorithm}
\label{sec:algorithm}
Our algorithm works in two stages. First, it greedily removes vertices from the graph to improve balance as much as possible, until it obtains a balanced subgraph. Then, it does a single pass over the removed vertices and restores the ones that do not violate balance. In this section we describe these two stages in detail, as well as several optimizations and a procedure to enable the processing of huge graphs. Throughout this section, we assume the input graph to be connected.

The procedure, which we dub \ouralgo (Trimming Iteratively to Maximize Balance), is summarized in Algorithm \ref{alg:timbal}. This section explains each of its steps in detail.

\begin{algorithm}[t]
  \caption{\ouralgo}
  Input: signed graph $\graph$
  \begin{algorithmic}[1]
    \STATE $R\gets \emptyset$
    \STATE Optionally: $R\gets$ \randprepro$(\graph)$, $\graph \gets \graph \setminus R$    
    \WHILE {$G$ is not balanced}
    \STATE Compute $L$, the signed Laplacian of $G$.
    \STATE Compute bound vector $\rankvec$.
    \STATE Compute the set of vertices to remove $S$, based on $\rankvec$.
    \STATE $G \gets G\setminus S$; $R\gets R\cup S$.
    \STATE $G \gets $ largest connected component in $G$.
    \ENDWHILE
    \FOR {$v \in R$}
    \IF {$G\cup \{v\}$ is balanced}
    \STATE $G \gets G\cup \{v\}$
    \ENDIF
    \ENDFOR 
    \STATE Output $G$
  \end{algorithmic}
  \label{alg:timbal}
\end{algorithm}

\subsection{First stage: removing vertices}
\label{sec:first_stage}
In the first stage of the algorithm, we iteratively remove vertices from the graph. The challenge is to determine which vertices to remove at each step. Our criterion for selecting vertices to remove is based on Theorem~\ref{the:balanced}, and more precisely on Theorem~\ref{the:how_balanced}. In particular, the smallest eigenvalue of the signed Laplacian measures how far the graph is from being balanced.

Given a graph $\graph$ with signed Laplacian $\lap$, define $\lapi$ to be the signed Laplacian of $\graph\setminus \{i\}$, that is, of the graph resulting from removing vertex $i$. We want to find the vertex that minimizes the smallest eigenvalue of the resulting Laplacian, that is, we want to find the vertex $j$ such that
\begin{align}
  \label{eq:eig_criterion}
  j=\argmin_i\lambda_1(\lapi).
\end{align}

Naturally, computing $\lambda_1(\lapi)$ for every vertex $i$ using the spectral decomposition of the modified Laplacian $\lapi$ is costly. To overcome this challenge, we present the following result, which gives an upper bound on the smallest eigenvalue of the perturbed Laplacian.

\begin{lemma}
  \label{lem:upper_bound}
  Given a graph $\graph$ with signed Laplacian $\lap$, let $\lambda_1(\lap)$ be the smallest eigenvalue of $\lap$ and $\vec{v}$ an eigenvector of $\lap$ satisfying $L\vec{v}=\lambda_1(\lap)\vec{v}$. Then
  \begin{align}
    \label{eq:upper_bound}
  \lambda_1(\lapi) \leq \frac{\lambda_1(\lap)(1-2\vec{v}_i^2)-\sum_{j\in N(i)}\vec{v}_j^2 + \vec{v}_i^2d(i)}{1-\vec{v}_i^2},
  \end{align}
\end{lemma}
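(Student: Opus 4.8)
The plan is to use the Courant–Fischer (Rayleigh quotient) characterization of the smallest eigenvalue of the perturbed Laplacian $\lapi$, and to exhibit a good test vector obtained by restricting the eigenvector $\vec{v}$ of $\lap$ to the vertices other than $i$. Concretely, let $\vec{w}\in\mathbb{R}^{n-1}$ be the vector whose entries are $\vec{w}_j=\vec{v}_j$ for all $j\neq i$. Then $\lambda_1(\lapi)\leq \vec{w}^\top \lapi \vec{w}/\vec{w}^\top\vec{w}$. The denominator is immediate: since $\vec{v}$ can be taken to be a unit vector, $\vec{w}^\top\vec{w}=\|\vec{v}\|^2-\vec{v}_i^2=1-\vec{v}_i^2$, which matches the denominator on the right-hand side of \eqref{eq:upper_bound}. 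The work is therefore entirely in bounding the numerator $\vec{w}^\top \lapi \vec{w}$.

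For the numerator I would use the edge-sum form of the signed Laplacian quadratic form: for any signed graph $H$ and vector $\vec{x}$,
\[
\vec{x}^\top L(H)\vec{x}=\sum_{\{j,k\}\in E^+(H)}(\vec{x}_j-\vec{x}_k)^2+\sum_{\{j,k\}\in E^-(H)}(\vec{x}_j+\vec{x}_k)^2.
\]
Applying this to $H=\graph\setminus\{i\}$ and $\vec{x}=\vec{w}$, the right-hand side is exactly the same as $\vec{v}^\top\lap\vec{v}$ except that the terms corresponding to edges incident to $i$ are missing. Hence
\[
\vec{w}^\top \lapi \vec{w}=\vec{v}^\top \lap \vec{v}-\sum_{j\in N(i)}(\vec{v}_i \mp \vec{v}_j)^2,
\]
where each sign is chosen according to whether the edge $\{i,j\}$ is positive or negative. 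Now $\vec{v}^\top\lap\vec{v}=\lambda_1(\lap)$ since $\vec{v}$ is a unit eigenvector for $\lambda_1(\lap)$. Expanding the subtracted sum gives $\sum_{j\in N(i)}(\vec{v}_i^2 \mp 2\vec{v}_i\vec{v}_j+\vec{v}_j^2)=\vec{v}_i^2 d(i)\;\mp\;2\vec{v}_i\sum_{j\in N(i)}(\pm\vec{v}_j)+\sum_{j\in N(i)}\vec{v}_j^2$. The middle (cross) term is where the eigenvector equation enters: the $i$-th coordinate of $L\vec{v}=\lambda_1(\lap)\vec{v}$ reads $d(i)\vec{v}_i-\sum_{j\in N(i)}(\pm\vec{v}_j)=\lambda_1(\lap)\vec{v}_i$, i.e. $\sum_{j\in N(i)}(\pm\vec{v}_j)=(d(i)-\lambda_1(\lap))\vec{v}_i$, where the sign in front of $\vec{v}_j$ is $+$ for a positive edge and $-$ for a negative edge — exactly the sign pattern appearing in the cross term. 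Substituting this identity eliminates the neighbour sum in favour of $d(i)$, $\lambda_1(\lap)$ and $\vec{v}_i$, yielding $\vec{w}^\top\lapi\vec{w}=\lambda_1(\lap)-\vec{v}_i^2 d(i)+2\vec{v}_i^2(d(i)-\lambda_1(\lap))-\sum_{j\in N(i)}\vec{v}_j^2=\lambda_1(\lap)(1-2\vec{v}_i^2)-\sum_{j\in N(i)}\vec{v}_j^2+\vec{v}_i^2 d(i)$. Dividing by $\vec{w}^\top\vec{w}=1-\vec{v}_i^2$ gives precisely \eqref{eq:upper_bound}.

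The one genuinely delicate point — and the step I expect to be the main obstacle — is the sign bookkeeping in the cross term: one must check that the signs produced by expanding $(\vec{v}_i\mp\vec{v}_j)^2$ for positive versus negative edges line up exactly with the signs in the $i$-th row of $L\vec{v}=\lambda_1(\lap)\vec{v}$, so that the substitution of the eigenvector identity is legitimate edge-by-edge rather than only in aggregate. A secondary caveat worth a remark is the degenerate case $\vec{v}_i^2=1$ (so $1-\vec{v}_i^2=0$): this forces $\vec{v}_j=0$ for all $j\neq i$, which for a connected graph on at least two vertices cannot satisfy $L\vec{v}=\lambda_1(\lap)\vec{v}$ unless $i$ is isolated, so the denominator is strictly positive and the bound is well defined. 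Everything else is routine algebra, and no step requires more than the Rayleigh quotient bound, the edge-sum form of the quadratic form, and one coordinate of the eigenvector equation.
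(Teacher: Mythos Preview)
Your proposal is correct and follows essentially the same route as the paper: both use the restricted eigenvector $\vec{\hat v}=\vec{w}$ as a Rayleigh-quotient test vector for $\lapi$, compute $\vec{\hat v}^\top\lapi\vec{\hat v}$ by subtracting from $\vec{v}^\top\lap\vec{v}=\lambda_1(\lap)$ the contributions of edges incident to $i$, and then invoke the $i$-th coordinate of $L\vec{v}=\lambda_1(\lap)\vec{v}$ to eliminate the cross term. The only cosmetic difference is that you work via the edge-sum form of the quadratic form whereas the paper writes the perturbation in matrix-entry terms; your added remarks on sign bookkeeping and the degenerate case $\vec{v}_i^2=1$ are welcome but not needed beyond what the paper does.
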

where $N(i)$ denotes the set of neighbours of $i$ in $G$.

\begin{proof}
  We can obtain the matrix $\lapi$ by applying the following operations to $\lap$: (1) for every $j$ in $N(i)$, subtract 1 from $\lap_{jj}$; (2) remove row and column $i$. Thus, if we define the vector $\vec{\hat v}$ to be equal to $\vec{v}$ after removing the $i$-th entry, it is
  \[
    \vec{\hat v}^T\lapi\vec{\hat v} = \vec{v}^T\lap \vec{v} - d(i)\vec{v}_i^2 -\sum_{j\in N(i)}\vec{v}_j^2 - 2\vec{v}_i\sum_{j\in N(i)}\vec{v}_jL_{ij}.
  \]
  Now, observe that $\vec{v}^T\lap \vec{v}=\lambda_1(\lap)$ and $\sum_{j\in N(i)}\vec{v}_jL_{ij}=\lambda_1(\lap)\vec{v}_i-d(i)\vec{v}_i$. Therefore, 
  \[
    \vec{\hat v}^T\lapi\vec{\hat v} =  \lambda_1(\lap) - \left ( \sum_{j\in N(i)}\vec{v}_j^2 + \vec{v}_i^2(2\lambda_1(\lap)-d(i)) \right).
  \]
  Since $\lambda_1(\lapi)=\min_{\vec{x}}\frac{\vec{x}^T\lapi \vec{x}}{\vec{x}^T\vec{x}}$, we have
  \begin{align*}
    \lambda_1(\lapi) &\leq \frac{\vec{\hat v}^T\lapi\vec{\hat v}}{\vec{\hat v}^T\vec{\hat v}}
    \\&=  \frac{\lambda_1(\lap) - \left ( \sum_{j\in N(i)}\vec{v}_j^2 + \vec{v}_i^2(2\lambda_1(\lap)-d(i)) \right)}{1-\vec{v}_i^2}.
  \end{align*}
  Elementary computations yield the desired result.  
\end{proof}
We define the vector $\rankvec$ whose entries are the values of the right-hand side of Inequality~(\ref{eq:upper_bound}), for each $i$, i.e.,
\begin{align}
  \label{eq:rankvec}
  \rankvec_i=\frac{\lambda_1(\lap)(1-2\vec{v}_i^2)-\sum_{j\in N(i)}\vec{v}_j^2 + \vec{v}_i^2d(i)}{1-\vec{v}_i^2}.
\end{align}
In order to choose which vertex to remove from the graph, we can therefore take the one minimizing $\rankvec_i$. The first stage of our algorithm removes vertices according to this criterion --- as shown in Section~\ref{sec:several}, we can remove several vertices at once --- until a balanced subgraph is found.

\subsection{Second stage: restoring vertices}
Once we have found a balanced induced subgraph $G'=(V',E')$, it is straightforward to obtain the corresponding partition $V_1\cup V_2=V'$ that agrees perfectly with the edge signs, e.g., by performing a breadth-first search. At this point, we can inspect the vertices removed in the first stage to see if some of them agree with the obtained partition and they can thus be reinserted into the graph. That is, if our input graph is $\graph=(V,E)$, we consider the set of removed vertices $R=V\setminus V'$. For every $v\in R$, if adding $v$ to either $V_1$ or $V_2$ results in a balanced graph, we add it back to $G'$ --- restoring as well its edges with endpoints in $G'$ --- and proceed. We inspect this set of vertices in the order they were removed from the graph in the first stage of the algorithm.

In the remainder of this section we discuss the different considerations that must be taken into account to implement our algorithm.

\subsection{Computing the bound efficiently}
\label{sec:efficiency}
A key advantage of the bound from Lemma~\ref{lem:upper_bound} is that it can be computed efficiently. Given a graph $\graph$, we define $\bar\lap$ to be the matrix whose entries are the absolute values of the entries of $\lap(\graph)$. 

Define the matrix $W=\bar\lap+2\lambda_1(\lap)I$ and $\vec{w}=\vec{v}\circ \vec{v}$, where $\circ $ denotes the element-wise product of two vectors. Then
\begin{align}
  \label{eq:boundfast}
\vec{r}_i=\frac{(\lambda_1(\lap)\mathbf 1- W\vec{w})_i}{1-\vec{v}_i^2}.
\end{align}
That is, the computation of the vector $\vec{r}$ reduces essentially to a matrix-vector multiplication operation.


\subsection{Removing several vertices at once}
\label{sec:several}

Every time we remove a vertex, we need to compute the smallest eigenvalue and corresponding eigenvector of the updated Laplacian. Even though this can be done efficiently (see Section~\ref{sec:update_ed}), when dealing with large graphs the overall computation time may be too high. Therefore, it might be desirable to remove several vertices at once, before updating the eigenpair, in order to find a balanced subgraph more quickly.

The most straightforward way to accomplish this batch operation is to simply consider the $k$ smallest entries of $\rankvec$ and remove the corresponding $k$ vertices. However, we argue that this might have undesirable consequences. Consider the graph on the left of Figure~\ref{fig:example_many}. Removing either vertex 1 or 2 will make the graph balanced. Thus, both $\rankvec_1$ and $\rankvec_2$ --- where $\rankvec$ is the ranking vector defined in Equation~(\ref{eq:rankvec}) --- are bound to be equally small. Removing these two vertices at the same time will result in a subgraph of size 2, but we could have obtained a balanced subgraph of size 3 by removing only one vertex.

To partially alleviate this shortcoming, we propose considering independent --- i.e., non-neighbouring --- vertices for simultaneous removal only. Formally, consider we want to remove $k$ vertices, and assume we have so far chosen $q<k$ of these, to form the set $R$. Then, the next chosen vertex is defined as
\[
\argmin_{i\notin \bigcup_{j\in R}N(j)}\rankvec_i.
\]
Choosing independent vertices has the additional advantage that the bound given in Lemma~\ref{lem:upper_bound} is additive in the following sense.

\begin{lemma}
  \label{lem:upper_bound_k}
  Given a graph $\graph$ with signed Laplacian $\lap$, let $\lambda_1(\lap)$ be the smallest eigenvalue of $\lap$ and $\vec{v}$ an eigenvector of $\lap$ satisfying $L\vec{v}=\lambda_1(\lap)\vec{v}$. Let $R$ be a set of independent vertices in $G$ and let $\lapR$ be the signed Laplacian of $G\setminus R$. Then
  \begin{align}
    \label{eq:upper_bound_k}
  \lambda_1(\lapR) \leq \frac{\sum_{i\in R}\lambda_1(\lap)(1-2\vec{v}_i^2)-\sum_{j\in N(i)}\vec{v}_j^2 + \vec{v}_i^2d(i)}{1-\sum_{i\in R}\vec{v}_i^2}.
  \end{align}
\end{lemma}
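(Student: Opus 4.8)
The plan is to mimic the proof of Lemma~\ref{lem:upper_bound}, using the Rayleigh quotient with the same trial vector, but now accounting for the removal of all vertices in $R$ simultaneously. Let $\vec{\hat v}$ denote the vector obtained from $\vec{v}$ by deleting all entries indexed by $R$. I would first describe how $\lapR$ arises from $\lap$: remove all rows and columns indexed by $R$, and for every vertex $j \notin R$ that had a neighbour in $R$, decrease $\lap_{jj}$ by the number of its neighbours lying in $R$. The key simplification comes from $R$ being \emph{independent}: no edge of $G$ has both endpoints in $R$, so the contributions of the individual vertices $i \in R$ to the quadratic form do not interfere with one another, and in particular the sets $N(i)$ for $i\in R$ intersect $R$ trivially.

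Concretely, I would compute $\vec{\hat v}^T \lapR \vec{\hat v}$ by starting from $\vec{v}^T \lap \vec{v} = \lambda_1(\lap)$ and subtracting, for each $i \in R$, exactly the terms that were subtracted in the single-vertex case: the diagonal term $d(i)\vec{v}_i^2$, the neighbour terms $\sum_{j \in N(i)} \vec{v}_j^2$, and the cross term $2\vec{v}_i \sum_{j \in N(i)} \vec{v}_j \lap_{ij}$. Because $R$ is independent, these corrections for distinct $i, i' \in R$ touch disjoint off-diagonal entries of $\lap$ (the edges incident to $i$ versus those incident to $i'$), so they simply add; there is no double-counting of any edge and no leftover cross term between two removed vertices. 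Using $\sum_{j\in N(i)} \vec{v}_j \lap_{ij} = \lambda_1(\lap)\vec{v}_i - d(i)\vec{v}_i$ exactly as before, each vertex $i \in R$ contributes $\sum_{j \in N(i)} \vec{v}_j^2 + \vec{v}_i^2(2\lambda_1(\lap) - d(i))$ to the amount subtracted, giving
\[
  \vec{\hat v}^T \lapR \vec{\hat v} = \lambda_1(\lap) - \sum_{i \in R}\left( \sum_{j\in N(i)} \vec{v}_j^2 + \vec{v}_i^2(2\lambda_1(\lap) - d(i)) \right).
\]
Then $\vec{\hat v}^T \vec{\hat v} = 1 - \sum_{i \in R} \vec{v}_i^2$ (since $\vec{v}$ is a unit eigenvector and the removed coordinates are distinct), and the variational characterization $\lambda_1(\lapR) \leq \vec{\hat v}^T \lapR \vec{\hat v} / \vec{\hat v}^T \vec{\hat v}$ yields the claimed bound after rearranging the numerator into the form $\sum_{i\in R}\left(\lambda_1(\lap)(1 - 2\vec{v}_i^2) - \sum_{j\in N(i)}\vec{v}_j^2 + \vec{v}_i^2 d(i)\right)$, where the $\lambda_1(\lap)$ term outside the sum is absorbed by writing $\lambda_1(\lap) = \sum_{i\in R}\lambda_1(\lap)\vec{v}_i^2 \cdot(\text{corrections})$; more carefully, one checks that $\lambda_1(\lap)\big(1 - \sum_{i\in R}\vec{v}_i^2\big) + \sum_{i\in R}\big(\lambda_1(\lap)\vec{v}_i^2 - (\text{the subtracted terms})\big)$ collapses to the stated numerator.

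The main obstacle, and the only place the independence hypothesis is essential, is verifying that the quadratic-form corrections for the different vertices of $R$ genuinely decouple: one must check that when we delete several rows/columns and simultaneously decrement several diagonal entries, no edge is accounted for twice and no surviving off-diagonal term $\lap_{i i'}$ with $i, i' \in R$ spoils the additive structure. Independence guarantees $\lap_{ii'} = 0$ for $i \neq i'$ in $R$, so the expansion of $\vec{v}^T \lap \vec{v} - \vec{\hat v}^T \lapR \vec{\hat v}$ contains no surviving interaction term between two removed vertices, and the per-vertex bookkeeping from Lemma~\ref{lem:upper_bound} carries over verbatim and sums. Everything else is the same elementary rearrangement as in the proof of Lemma~\ref{lem:upper_bound}.
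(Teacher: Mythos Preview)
Your approach is exactly what the paper has in mind: it says only that the lemma ``is easily verified by similar techniques as employed in the proof of Lemma~\ref{lem:upper_bound},'' and your Rayleigh-quotient argument with the truncated eigenvector $\vec{\hat v}$, together with the independence of $R$ to make the per-vertex corrections additive, is precisely that. Your computation
\[
  \vec{\hat v}^T \lapR \vec{\hat v} \;=\; \lambda_1(\lap) - \sum_{i \in R}\Bigl( \sum_{j\in N(i)} \vec{v}_j^2 + \vec{v}_i^2\bigl(2\lambda_1(\lap) - d(i)\bigr) \Bigr)
\]
is correct, as is $\vec{\hat v}^T\vec{\hat v}=1-\sum_{i\in R}\vec{v}_i^2$.

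There is one small slip in your final step. You claim the Rayleigh numerator ``rearranges'' into $\sum_{i\in R}\bigl(\lambda_1(\lap)(1-2\vec{v}_i^2)-\sum_{j\in N(i)}\vec{v}_j^2+\vec{v}_i^2 d(i)\bigr)$, but in fact this sum equals $\vec{\hat v}^T\lapR\vec{\hat v} + (|R|-1)\,\lambda_1(\lap)$: the leading $\lambda_1(\lap)$ appears once in $\vec{\hat v}^T\lapR\vec{\hat v}$ but $|R|$ times in the summed form. Your attempted ``absorption'' identity does not hold as an equality. The inequality in the lemma still follows, however, because the signed Laplacian is positive semidefinite, so $\lambda_1(\lap)\ge 0$ and hence the stated numerator is at least $\vec{\hat v}^T\lapR\vec{\hat v}$. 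Replace the rearrangement claim by this one-line observation and the proof is complete.
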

The lemma is easily verified by similar techniques as employed in the proof of Lemma~\ref{lem:upper_bound}.

Furthermore, the value of this upper bound can be tracked as we add vertices to the set $R$ to decide how many of them to remove. Note that as the set $R$ grows, the upper bound becomes less reliable (as the perturbation of $L$ is more significant). At some point, the magnitude of the denominator will become too small and the bound will decrease very slowly, or even increase. This can be used as a criterion to choose a cut-off point for the removal.

Note that this does not completely solve the problem of adequately choosing a vertex set for simultaneous removal. Consider, for instance, a cycle graph with arbitrary signs. Removing any vertex results in a balanced subgraph, since every tree is balanced \cite{zaslavsky1982signed}. However, simply discarding neighbouring vertices is not sufficient to limit the number of removed vertices to one in this case. In general, determining this set to optimality might constitute a hard problem in and of itself, and is therefore left for future work. Nevertheless, in our experiments we show that discarding neighbours provides good results in practice.

\begin{center}
  \begin{figure}[t]
    \centering
    \begin{tikzpicture}[
        roundnode/.style={circle, draw=gray!60, fill=gray!20, very thick, minimum size=7mm},
      ]
      \node[roundnode]      (ur)                              {3};
      \node[roundnode]        (lr)       [below=of ur] {4};
      \node[roundnode]      (ul)       [left=of ur] {1};
      \node[roundnode]        (ll)       [below=of ul] {2};
      
      \draw[-] (ur.south) -- (lr.north);
      \draw[dotted, thick] (ul) -- (ur);
      \draw[dotted, thick] (ll) -- (lr);
      \draw[dotted, thick] (ul) -- (lr);
      \draw[dotted, thick] (ll) -- (ur);
      \draw[dotted, thick] (ll) -- (ul);
      
      \draw[|->] (1,-1) -- (2,-1);
      
      \node[roundnode]      (ul2)       [right=2cm of ur]{1};
      \node[roundnode]      (ur2)         [right=of ul2]           {3};
      \node[roundnode]        (lr2)       [below=of ur2] {4};

      \draw[dotted, thick] (ul2) -- (ur2);
      \draw[dotted, thick] (ul2) -- (lr2);
      \draw[-] (ur2) -- (lr2);
    \end{tikzpicture}
    
    \caption{Illustration of why we must only consider independent vertices for simultaneous removal. Solid edges are positive, while dashed ones are negative. Removing either 1 and 2 will make the graph balanced, so both will get an equal value in the ranking. However, we only need to remove one.}
    \label{fig:example_many}
  \end{figure}
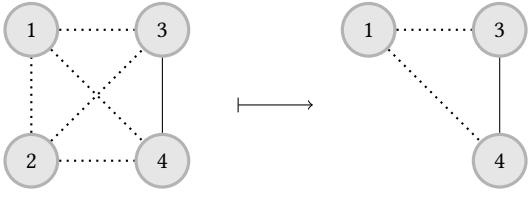
\end{center}

\subsection{Updating the eigenpair}
\label{sec:update_ed}
A remaining concern is the computation of the eigenvalue $\lambda_1(\lapi)$ and the corresponding eigenvector each time we remove a set of vertices. We propose two alternatives for this purpose.

\spara{Locally-optimal block preconditioned conjugate gradient me\-thod:} Since the smallest eigenvalue of the Laplacian is in the ``flat'' part of the spectrum, that is, where consecutive eigenvalues are close to each other, the methods usually employed to compute eigenvalue decompositions can be slow to converge when dealing with large matrices. To speed up the process, we use the method described by Knyazev \cite{knyazev2001toward} to estimate the desired eigenpair. Since our goal is to rank the vertices according to the corresponding upper bound from Lemma~\ref{lem:upper_bound}, an approximation of $\lambda_1(\lapi)$ and the corresponding eigenvector is enough. Our experimental results support this claim.

\spara{Low-rank updates of the eigenvalue decomposition:} An alternative to the use of the conjugate gradient method is to rely on known results concerning low-rank updates of the spectral decomposition. In particular, consider a matrix $L$ with eigenvalue decomposition $L=Q\Lambda Q^T$. Now consider a matrix $\tilde L=L+W$, where $W$ is a rank-$\rho$ matrix satisfying $W=VV^T$, i.e., positive semidefinite. It is well known \cite{arbenz1988spectral} that an eigenvalue $\lambda$ of $\tilde L$ --- not in the spectrum of $L$ --- makes the following expression equal to zero:
\begin{align}
\det(I-V^T(\lambda-L)^{-1}V),
\end{align}
where $\det$ denotes the determinant of a matrix.
Elementary operations yield the following, equivalent expression:
\begin{align}
  \label{eq:ed_update}
\det(I-U^T(\lambda-\Lambda)^{-1}U),
\end{align}
where $U=Q^TV$. Since $(\lambda-\Lambda)^{-1}$ is diagonal, its inversion is cheap. Moreover, the number of eigenvalues of $\tilde L$ below any real number can be inferred exactly.
That is, if one can afford to compute the eigenvalue decomposition of $L$ and the rank $\rho$ of the perturbation $W$ is small, the eigenvalue decomposition of $\tilde L$ can be computed efficiently by means of a bisection algorithm on Expression~(\ref{eq:ed_update}).

It only remains to show how to compute the matrix $V$, so that we can construct $U$. We now show that if the vertices to remove are chosen carefully, the entries of $V$ depend only on the degree of the removed vertices, and thus can be permanently stored and queried on execution, instead of being computed at each iteration.

First, consider a graph $\graph$ with signed Laplacian $\lap$. We remove a single vertex and obtain a new graph with Laplacian $\lapi=\lap-W$. It is easily verified that $W$ --- if we exclude zero rows and columns --- is the Laplacian of a star graph, that is, a connected $k$-tree with $k-1$ leaves. Now, consider we remove a set $R$ of vertices, to obtain $\lapR=\lap-W$, satisfying the following condition:
\begin{align}
\label{eq:indep_condition}
  \text{for all } i\neq j \in R, N(i)\cap N(j)=\emptyset.
\end{align}
Then the matrix $W$ can be permuted so that it is block-diagonal, with each block corresponding to the Laplacian of a star graph. The following result establishes that its eigenvalues are easily inferred from the size of each block.
\begin{lemma}
  Let $\graph$ be a signed star graph comprised of $k$ vertices, with signed Laplacian $\lap$. Then the eigenvalues of $\lap$ are
  \begin{itemize}
  \item 0 with multiplicity 1,
  \item $k$ with multiplicity 1 and
  \item 1 with multiplicity $k-2$.
  \end{itemize}
\end{lemma}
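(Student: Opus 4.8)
The plan is to exhibit a complete orthogonal eigenbasis of $\lap$ explicitly. Label the center of the star as vertex $1$ and the leaves as $2,\dots,k$, and write $\sigma_i\in\{+1,-1\}$ for the sign of the edge $\{1,i\}$, so that $\lap_{11}=k-1$, $\lap_{ii}=1$ for $i\ge 2$, $\lap_{1i}=\lap_{i1}=-\sigma_i$, and all remaining off-diagonal entries vanish (assume $k\ge 2$; $k=1$ is trivial). First I would handle the two extreme eigenvalues. The vector $\vec{a}$ with $\vec{a}_1=1$ and $\vec{a}_i=\sigma_i$ satisfies $\lap\vec{a}=\vec 0$: at a leaf $i$ the $i$-th coordinate of $\lap\vec a$ is $\vec a_i-\sigma_i\vec a_1=\sigma_i-\sigma_i=0$, and the center coordinate is $(k-1)\vec a_1-\sum_{i\ge 2}\sigma_i\vec a_i=(k-1)-\sum_{i\ge 2}\sigma_i^2=0$. (This is also an instance of Theorem~\ref{the:balanced}, since a star is a tree and hence balanced \cite{zaslavsky1982signed}.) Similarly, the vector $\vec{b}$ with $\vec b_1=-(k-1)$ and $\vec b_i=\sigma_i$ satisfies $\lap\vec b=k\vec b$: at a leaf $i$ we get $\vec b_i-\sigma_i\vec b_1=\sigma_i+\sigma_i(k-1)=k\sigma_i$, and at the center $(k-1)\vec b_1-\sum_{i\ge 2}\sigma_i\vec b_i=-(k-1)^2-(k-1)=-k(k-1)=k\vec b_1$. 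So $0$ and $k$ are eigenvalues, each of multiplicity at least $1$.

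Next I would pin down the eigenvalue $1$ together with its exact multiplicity. Consider the $(k-1)$-dimensional subspace $U$ of vectors with first coordinate $0$. For $\vec x\in U$, the $i$-th coordinate of $\lap\vec x$ equals $\vec x_i$ for every leaf $i$, while the center coordinate equals $-\sum_{i\ge 2}\sigma_i\vec x_i$; hence $\lap\vec x=\vec x$ exactly when $\sum_{i\ge 2}\sigma_i\vec x_i=0$. This is a single nontrivial linear constraint on $U$ (it is violated by $\vec a$ restricted to the leaves), so its solution space $U_1\subseteq U$ has dimension precisely $k-2$ and consists of eigenvectors for eigenvalue $1$. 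Finally I would verify orthogonality: $\vec a^{T}\vec b=-(k-1)+\sum_{i\ge2}\sigma_i^2=0$, and every $\vec x\in U_1$ is orthogonal to both $\vec a$ and $\vec b$ since, restricted to the leaves, both of the latter equal $(\sigma_2,\dots,\sigma_k)$ and $\sum_{i\ge2}\sigma_i\vec x_i=0$ by definition of $U_1$. We have thus produced $1+(k-2)+1=k$ mutually orthogonal eigenvectors of the $k\times k$ symmetric matrix $\lap$, which is a full eigenbasis; consequently the listed eigenvalues and multiplicities are exact.

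There is no genuine obstacle; the only point that needs a moment's care is the \emph{exactness} of the multiplicity of $1$, i.e.\ that it is $k-2$ and not larger. This is delivered by the dimension count for $U_1$ together with the orthogonality check (equivalently, by matching $\operatorname{tr}\lap=2(k-1)$ against $0+k+(k-2)\cdot 1$, which leaves no room for further eigenvalues). An alternative, slightly slicker, route is to observe that a signed star is switching-equivalent to the all-positive star $K_{1,k-1}$ and that switching acts on $\lap$ by a $\pm 1$ diagonal similarity, hence preserves the spectrum, reducing the claim to the classical Laplacian spectrum $\{0,\,1^{(k-2)},\,k\}$ of $K_{1,k-1}$; but the direct computation above is equally short and fully self-contained.
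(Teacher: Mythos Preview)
Your proof is correct and follows essentially the same approach as the paper: both exhibit the eigenvectors for $0$ and $k$ explicitly from the structure of $\lap$, and then argue that the remaining eigenvalue $1$ has multiplicity $k-2$. The only cosmetic difference is that the paper disposes of the multiplicity of $1$ by observing that $\operatorname{rank}(\lap-I)=2$, whereas you construct the eigenspace $U_1$ directly and add an orthogonality/trace check; these are equivalent one-line observations rather than genuinely different routes.
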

\begin{proof}
  First, since every tree is balanced \cite{zaslavsky1982signed}, from Theorem~\ref{the:balanced} we know that 0 is an eigenvalue of $\lap$.

  To infer the rest of the eigenvalues, consider the structure of $\lap$: it is $\lap_{11}=k-1$, and $\lap_{ii}=1, \lap_{1i}=\lap_{i1}=\pm 1$ for all $i>1$. Thus, the vector $\vec{v}=(x, \gamma_2, \dots, \gamma_k)$ where $x$ is an arbitrary real number and  $\gamma_i=x/(k-1)\times \sign{\lap_{i1}}$ is an eigenvector of eigenvalue $k$.
  That 1 is an eigenvalue with multiplicity $k-2$ is easily verified from the fact that the rank of $\lap-I$ is 2.
\end{proof}
Given the structure of $\lap$, once the eigenvalues are known, the eigenvectors can be efficiently computed. Thus, the eigenvalue decomposition of $W$ need not be computed explicitly.
The eigenvectors only need to be computed once for each value of $k$, and can then be reused whenever a vertex of degree $k$ is removed.

To summarize, if we know the spectral decomposition of $L$ and choose to remove vertices satisfying Condition (\ref{eq:indep_condition}), we can easily make use of Expression~(\ref{eq:ed_update}) to update the eigenpair.

\subsection{Handling various connected components}
During the execution of the first stage --- see Section~\ref{sec:first_stage} --- after removing the chosen vertices the graph might become disconnected. In this case, we can consider various alternatives. If among the resulting connected components only one is large enough, we can discard the rest. If various connected components are large enough to warrant further analysis, we can simply apply the algorithm recursively on each of them.

Nevertheless, in our experiments we observed that the connected components resulting from the removal, except from the largest one, are small, comprised of a handful of vertices in almost every case. This is consistent with the principle behind the algorithm --- i.e., the reduction of the smallest eigenvalue of the Laplacian. Note, for instance, that graphs comprised of two vertices are always balanced. Therefore, if removing a vertex results in a two-vertex structure becoming disconnected, the obtained graph will have a smallest eigenvalue equal to zero, and this vertex will thus be a good choice according to the criterion defined by (\ref{eq:eig_criterion}). Notice that this is not undesirable behaviour: if a small subgraph becomes disconnected by removing one (or a few) vertices, then it is not densely connected to the rest of the graph and would therefore not contribute significantly to the density of the balanced subgraph found by the algorithm.

\subsection{Scaling to big graphs}
\label{sec:scaling}
The proposed algorithm provides an efficient method to detect vertices to remove from a graph in order to improve its balance. The method, however, requires the estimation of an eigenvalue-eigenvector pair from the ``flat'' part of the spectrum, where algorithms for this purpose take longer to converge. This operation is linear in the number of entries of the adjacency matrix, that is, potentially quadratic in the number of vertices. Thus, the analysis of a graph comprised of millions of nodes can remain impractical.

To alleviate this shortcoming we propose a randomized preprocessing algorithm. Given the quadratic complexity of our algorithm's iterations, processing a large number of subgraphs independently and then combining the results can result in significant time savings. To justify the approach, we rely on the following, easily verified, statement:
\begin{proposition}
  Let $G$ be a signed graph. $G$ is balanced if and only if all of its subgraphs are balanced.
\end{proposition}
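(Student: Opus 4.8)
The plan is to dispatch both directions quickly, since the forward implication is immediate and the reverse requires only the observation that a valid bipartition restricts to any subgraph. The ``only if'' direction (all subgraphs balanced $\Rightarrow$ $\graph$ balanced) is trivial, because $\graph$ is a subgraph of itself, so I would dispose of it in one sentence.

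For the ``if'' direction, I would take $\graph$ balanced with a witnessing partition $V = V_1 \cup V_2$ as in the Definition of balanced graph, and let $H$ be any subgraph of $\graph$ on vertex set $V_H \subseteq V$. Setting $V_1^H = V_1 \cap V_H$ and $V_2^H = V_2 \cap V_H$, every edge of $H$ is already an edge of $\graph$, so an edge inside $V_1^H$ or inside $V_2^H$ is positive and an edge between the two parts is negative, just as it was in $\graph$. If $H$ is connected this exhibits it as balanced; if not, I would apply the same argument to each connected component (using the convention that a disconnected signed graph is balanced precisely when each of its components is). An equivalent route, which avoids mentioning the partition at all, is to invoke Harary's cycle criterion — a connected signed graph is balanced iff the product of edge signs around every cycle is positive — and note that every cycle of $H$ is a cycle of $\graph$, hence positive.

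The only point requiring a moment's care is that the Definition of balance in the paper is phrased for connected graphs, so one must either restrict to connected subgraphs or adopt the component-wise convention for disconnected ones; either way no genuine obstacle arises. The proposition is a purely structural remark, included only to justify the soundness of the randomized subsampling preprocessing, and the proof is therefore a couple of lines.
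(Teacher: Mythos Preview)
Your argument is correct in substance, and in fact more than the paper provides: the paper does not prove this proposition at all, merely calling it ``easily verified'' before moving on to explain how it justifies the subsampling preprocessing. Your partition-restriction argument (or the equivalent cycle-criterion route) is exactly the right one-paragraph verification.

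One cosmetic correction: you have the ``if'' and ``only if'' labels swapped. In the biconditional ``$\graph$ is balanced if and only if all of its subgraphs are balanced,'' the \emph{if} direction is ``all subgraphs balanced $\Rightarrow$ $\graph$ balanced'' (the trivial one, since $\graph$ is a subgraph of itself), and the \emph{only if} direction is ``$\graph$ balanced $\Rightarrow$ every subgraph balanced'' (the one needing the partition-restriction or cycle argument). Your proofs of the two implications are fine; only the names are reversed.

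Your remark about the connectedness hypothesis in the paper's definition of balance is well taken and is precisely the kind of care the paper glosses over.
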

The statement guarantees, on one hand, that by balancing random subgraphs we will never remove vertices from a balanced substructure of $G$, and therefore the process is safe in this regard. On the other hand, it implies that we cannot obtain a balanced subgraph of $G$ before all of its subgraphs are balanced. Thus, any unbalanced subgraph we sample needs to become balanced in $G$. Based on these facts, we propose the preprocessing algorithm summarized as Algorithm \ref{alg:randprepro}. The algorithm randomly samples a large number of connected subgraphs and then runs Algorithm \ref{alg:timbal} on them. The set of vertices that have been removed from at least one of these subgraphs is then removed from the main graph, which is then processed normally.

To sample subgraphs, we take the following approach. We sample a vertex uniformly at random, and then perform a randomized breadth-first search (RBFS). RBFS works as follows: we first take all neighbours of the sampled vertex. Then, at each step of the search, we only take a random fraction of the corresponding vertex's neighbours. The size of the fraction and the depth of the search are set by the user to achieve subgraphs of a certain size. This sampling strategy can produce dense subgraphs very efficiently.

\begin{algorithm}[t]
  \caption{\randprepro}
  Input: signed graph $\graph$
  \begin{algorithmic}[1]
    \STATE Sample $s$ subgraphs $G_1, \dots, G_s$ of size $k$ from $G$.
    \FOR {$i=1\dots, s$}
    \STATE Run first stage of \ouralgo on $G_i$, resulting in set of vertices to remove $R_i$.
    \ENDFOR
    \STATE Output $\bigcup_i R_i$.
  \end{algorithmic}
  \label{alg:randprepro}
\end{algorithm}

\subsection{Complexity analysis}
The running time of our algorithm is dominated by the following operations: the computation of the smallest eigenvalue and corresponding eigenvector of the Laplacian, done essentially in $\bigO(|E|)$ operations; the computation of the bounds in Eq. (\ref{eq:upper_bound}), which can be done in time $\bigO(|V|^2)$, as shown in Eq. (\ref{eq:boundfast}); the test of balance and location of connected components, a breadth-first search done in $\bigO(|V|+|E|)$ time operations; the vertex restoration phase can be implemented using two inner product operations per vertex to check compliance with the balanced subgraph, which results $\bigO(|V|\Delta)$ operations for the whole stage, where $\Delta$ is the maximum degree over all graph vertices. All these steps can be implemented to exploit the efficiency of sparse matrix operations.

The optional \randprepro procedure computes the first stage of the algorithm for each sampled subgraph, whose size can be controlled by limiting the number of sampled neighbours in the RBFS step. Our experimental results shows that this approach remains effective even when the size of the sampled subgraphs is small, and their number moderate.

\section{Experiments}
\label{sec:experiments}

This section presents the evaluation of the proposed algorithm. Our main purpose is to determine the effectiveness of our method in finding large balanced subraphs. In particular, we evaluate the following aspects:
\squishlist
\item We measure the size of the subgraphs found by our method, in both vertex and edge cardinality.
\item We assess whether the vertex removal sequence produced by our method can be exploited to trade off solution quality and size, where by quality we mean that we allow our method to return not perfectly balanced subgraphs.
\item We measure the running time of our implementation.
\item We visualize some of the obtained results.
\squishend

For our experimental evaluation, we use a variety of real-world data, which we briefly describe below.

\spara{Datasets.}
We select publicly-available real-world signed networks, whose main characteristics are summarized in Table~\ref{tab:datasets}.
\textsf{HighlandTribes}\footnote{\href{http://konect.cc}{konect.cc}\label{foot:k}} represents the alliance structure of the Gahuku--Gama tribes of New Guinea.
\textsf{Cloister}$^{\ref{foot:k}}$ contains the esteem/disesteem relations of monks living in a cloister in New England (USA).
\textsf{Congress}$^{\ref{foot:k}}$ reports (un/)favorable mentions of politicians speaking in the US Congress.
\textsf{Bitcoin}\footnote{\href{http://snap.stanford.edu}{snap.stanford.edu}\label{foot:s}} and \textsf{Epinions}$^{\ref{foot:s}}$ are who-trusts-whom networks of the users of Bitcoin OTC and Epinions, respectively.
\textsf{WikiElections}$^{\ref{foot:k}}$ includes the votes about admin elections of the users of the English Wikipedia.
\textsf{Referendum}~\cite{lai2018stance} records Twitter data about the 2016 Italian Referendum: an interaction is negative if two users are classified with different stances, and is positive otherwise.
\textsf{Slashdot}$^{\ref{foot:s}}$ contains friend/foe links between the users of Slashdot.
The edges of \textsf{WikiConflict}$^{\ref{foot:s}}$ represent positive and negative edit conflicts between the users of the English Wikipedia.
\textsf{WikiPolitics}$^{\ref{foot:k}}$ represents interpreted interactions between the users of the English Wikipedia that have edited pages about politics.

\begin{table}[t]
\centering
\caption{\label{tab:datasets}Signed networks used in the experiments:
number of vertices and edges;
ratio of negative edges ($\rho_- = \frac{|E_-|}{|E_+ \cup E_-|}$);
 and ratio of non-zero elements of $A$ ($\delta = \frac{2|E_+ \cup E_-|}{|V|(|V|-1)}$).}
\vspace{-3mm}
\centerline{
\small
\begin{tabular}{lrrrrr}
\toprule
Real-world datasets & $|V|$    &    $|E_+ \cup E_-|$    &    $\rho_-$    &   $\delta$\\
\midrule
\tribes &  $16$\,\,\, &  $58$\,\,\,\, & $0.50$    &    $0.48$\,\,\,\,\,\,\,\,\,\,\\
\cloister       &  $18$\,\,\, & $125$\,\,\,\, & $0.55$    &    $0.81$\,\,\,\,\,\,\,\,\,\,\\
\congress       & $219$\,\,\, & $521$\,\,\,\, & $0.20$       &    $0.02$\,\,\,\,\,\,\,\,\,\,\\
\bitcoin        &   $5$\,k    &  $21$\,k\,    & $0.15$       &    $1.2e\!-\!03$\\
\election  &   $7$\,k    & $100$\,k\,    & $0.22$       &    $3.9e\!-\!03$\\
\referendum     &  $10$\,k    & $251$\,k\,    & $0.05$       &    $4.2e\!-\!03$\\
\slashdot       &  $82$\,k    & $500$\,k\,    & $0.23$       &    $1.4e\!-\!04$\\
\conflict   & $116$\,k    &   $2$\,M      & $0.62$    &    $2.9e\!-\!04$\\
\epinions       & $131$\,k    & $711$\,k\,    & $0.17$       &    $8.2e\!-\!05$\\
\politics   & $138$\,k    & $715$\,k\,    & $0.12$       &    $7.4e\!-\!05$\\
\midrule
\conflict-4   & $1.1$\,M    &   $34.7$\,M      & $0.62$    &    $3.4e\!-\!05$\\
\epinions-4       & $1.1$\,M    & $12.2$\,M    & $0.17$    &    $9.5e\!-\!06$\\
\bottomrule
\end{tabular}
}
\end{table}

\begin{table*}[t] 
  \begin{center}
    \caption{Largest balanced subgraph found by each method for each dataset}
    \vspace{-3mm}
    \label{tab:subgraphs}
      \begin{tabular}{l| rr| rr| rr| rr| rr}
        \toprule
         & \multicolumn{2}{c|}{\tribes} & \multicolumn{2}{c|}{\cloister} & \multicolumn{2}{c|}{\congress} & \multicolumn{2}{c|}{\bitcoin} & \multicolumn{2}{c}{\referendum} \\ \midrule
        method & $|V|$ & $|E|$ & $|V|$ & $|E|$ & $|V|$ & $|E|$ & $|V|$ & $|E|$ & $|V|$ & $|E|$ \\ \midrule
        \ouralgo & 13 & 35 & 10 & 33 & 208 & 452 & 4\,208 & 10\,158 & 8\,944 & 166\,243 \\
        \grasp & 10 & 18 & 6 & 11 & 115 & 145 & 2\,167 & 3\,686 & 5\,425 & 49\,105 \\
        \ggmz & 10 & 21 & 5 & 7 & 153 & 238 & 1\,388 & 1\,683 & 2\,501 & 2\,821 \\
        \eige & 12 & 37 & 8 & 27 & 11 & 16 & 7 & 17 & 132 & 6\,140 \\
        \midrule
        & \multicolumn{2}{c|}{\election} & \multicolumn{2}{c|}{\slashdot} & \multicolumn{2}{c|}{\conflict} & \multicolumn{2}{c|}{\politics} & \multicolumn{2}{c}{\epinions} \\ \midrule
        \ouralgo & 3\,786 & 18\,550 & 42\,205 & 96\,460 & 48\,136 & 356\,204 & 63\,252 & 218\,360 & 62\,010 & 169\,894 \\
        \grasp & 1\,752 & 4\,416 & 23\,289 & 40\,511 & 18\,576 & 82\,726 & 31\,561 & 81\,557 & 28\,189 & 63\,250 \\
        \ggmz & 713 & 771 & 16\,389 & 17\,867 & 6\,137 & 9\,145 & 23\,342 & 37\,098 & 21\,009 & 25\,013 \\
        \eige & 11 & 41 & 35 & 491 & 11 & 28 & 10 & 45 & 6 & 14 \\
        \bottomrule
      \end{tabular}
  \end{center}
\end{table*}

\begin{table*}[t] 
  \begin{center}
    \caption{Solution found by each method for each dataset, compared to the size of the planted balanced subgraph in the graph}
    \vspace{-3mm}
    \label{tab:planted}
    \begin{tabular}{l| rr| rr| rr| rr| rr| rr| rr}
      \toprule
      method & \multicolumn{2}{c|}{\election} & \multicolumn{2}{c|}{\slashdot} & \multicolumn{2}{c|}{\conflict} & \multicolumn{2}{c|}{\politics} & \multicolumn{2}{c|}{\epinions} & \multicolumn{2}{c|}{{\sc Powerlaw-3} } & \multicolumn{2}{c}{{\sc Powerlaw-4} } \\ \midrule
      $|V_p|$& 3500 & \% & 41\,000 & \% & 58\,000 & \% & 69\,000 & \% & 65\,500 & \% & 10\,000 & \% & 10\,000 & \% \\ \midrule
      \ouralgo & 4\,097 & 117\% & 65\,963 & 160\% & 88\,529 & 152\% &  123\,367 & 178\% & 103\,136 & 156\% & 11\,491 & 114\% & 11\,346 & 113\% \\
      \grasp &  1\,072 & 30\% &  9\,640&  23\% & 15\,268 & 26\%  & 11\,253 & 16\% & 9\,313 & 14\% & 4\,858  & 48\% & 5\,834 & 58\% \\
      \ggmz & 1\,952 & 55\% & 14\,320 & 34\% & 12\,671  & 21\% & 25\,202 & 36\% & 16\,944 & 25\% & 9\,341 & 93\% & 9\,344 & 93\% \\
      \bottomrule
    \end{tabular}
  \end{center}
\end{table*}

\subsection{Proposed baselines}
We compare the results of our method to heuristics proposed in the literature for the \mbs, as well as a non-trivial spectral baseline. We now describe these methods.


\smallskip
\noindent
\eige: The spectral approach from \cite{bonchi2019discovering}. We take the dominant eigenvector $\vec{v}$ of the adjacency matrix $A$ of the input graph. For a given threshold $\tau \in \mathbb R_+$, we construct a vector $\vec{x}$ as follows: $\vec{x}_i=sign(\vec{v}_i)$ if $|\vec{v}_i|\geq \tau$, and $\vec{x}_i=0$ otherwise. We try all possible values of $\tau$, that is, all values in $\{|\vec{v}_i| : i=1, \dots, n\}$, where $n$ is the number of nodes in the input graph. Note that $\vec{x}$ defines a partition of the graph into $V_1,V_2$, such that a vertex $i$ is in $V_1$ if and only if $\vec{v}_i=1$, and $i \in V_2$ if and only if $\vec{v}_i=-1$. If the graph is balanced, $\vec{v}$ reveals the perfect partition.


\smallskip
\noindent
\grasp: The \grasp heuristic proposed by Figueiredo and Frota~\cite{figueiredo2014maximum}. The method consists of a construction phase, which greedily builds a balanced partition inspecting the vertices one by one in random order, and a local-search phase. The local-search phase is computationally costly and yielded negligible improvements in our experiments if kept within reasonable running times. Therefore, we only report the results of the construction phase.

\smallskip
\noindent
\ggmz: The heuristic proposed by G{\"u}lpinar et al.~\cite{gulpinar2004extracting}, which functions as follows. First, a minimum spanning tree of the input graph is computed. This tree is then switched --- i.e., a subset of vertices is chosen and all cut edges change sign --- so that all edges become positive. The same switch is applied to the entire graph and a set of vertices that are independent in the negative edge set is returned.

\smallskip

All baselines, as well as our algorithm, are implemented in Python3, using sparse matrix operations when possible.\footnote{Source code: https://github.com/justbruno/finding-balanced-subgraphs}

Other proposals in the literature, such as the work of Chu et al. \cite{chu2016finding}, tackle similar problems, but they tend to find small or imbalanced subgraphs and therefore are not comparable to our methods.

\subsection{\ouralgo implementation details}
As explained in section \ref{sec:several}, our algorithm can be tuned to decide how many vertices are removed at each iteration. Even though better results might be obtained by fine-tuning this value for each problem instance, in our experiments we take a simple approach to balance quality and efficiency. In the small datasets --- \tribes, \cloister~ and \congress\ --- we remove one vertex at a time. In the rest of the datasets, we remove at most 100. Note that the set of removed vertices can be smaller if we cannot find 100 independent vertices.

In the first stage of the algorithm, whenever the removal of a vertex subset results in various connected components, we discard all but the largest. In all our experiments, all other components are very small, comprised of a handful of vertices at most, and are not worth analysing further.

In all experiments with large data sets ($\geq 82k$ vertices) we randomly sample $1000$ subgraphs of approximately 200 vertices and process them as explained in section \ref{sec:scaling}.
\subsection{Finding balanced subgraphs}
The main purpose of the algorithm described in this paper is to find large, preferably dense, balanced subgraphs in signed networks. Therefore, we first evaluate the ability of our method and the proposed baselines to accomplish this goal. We run all algorithms on all datasets, and measure the size of the obtained subgraph in both vertices ($|V|$) and edges ($|E|$). We report the results in Table \ref{tab:subgraphs}. Since the algorithms are sensitive to the choices made in their initial stages, we run them ten times and report the maximum values achieved in both vertex and edge count. 
  \ouralgo obtains larger results in all datasets, significantly so in most of them.

Although our algorithm finds balanced subgraphs of large size, we cannot know how far off we are from an optimal solution. 
To evaluate this, we do a further experiment where we create alternative versions of our datasets with planted balanced subgraphs of selected size. In order to do this, we make a randomly selected part of the graph balanced, by switching the edge signs accordingly, while randomizing the signs of all the other edges. As an additional performance test, we also generated some power-law graphs, using the Barab{\'a}si-Albert model \cite{barabasi1999emergence} and planted balanced subgraphs in them using the same procedure. The first power-law graph has $20\,000$ nodes and $59\,991$ edges 
($m=3$),\footnote{The parameter $m$ here refers to the one used in the Barab{\'a}si-Albert model.} while the second one has $20\,000$ and $79\,984$ edges ($m=4$). We experiment with planted balanced subgraphs of approximately half the size of total graph. The results can be seen in Table \ref{tab:planted}. $|V_p|$ denotes the size of the planted balanced subgraph.  {\sc Powerlaw-3} and {\sc Powerlaw-4} are Barab{\'a}si-Albert graphs with $m=3$ and $m=4$ respectively. We also report the percentage of the planted subgraph that was recovered by the algorithm (notice that it may be bigger than the planted subgraph, due to the added random noise). 

We observe that in all cases, our algorithm manages to return a balanced subgraph of size at least the size of the planted balanced subgraph, while the baselines fail to do so, for all datasets. 

\begin{figure*}[t]
  \begin{tabular}{cccc}    
  \includegraphics[width=.24\textwidth]{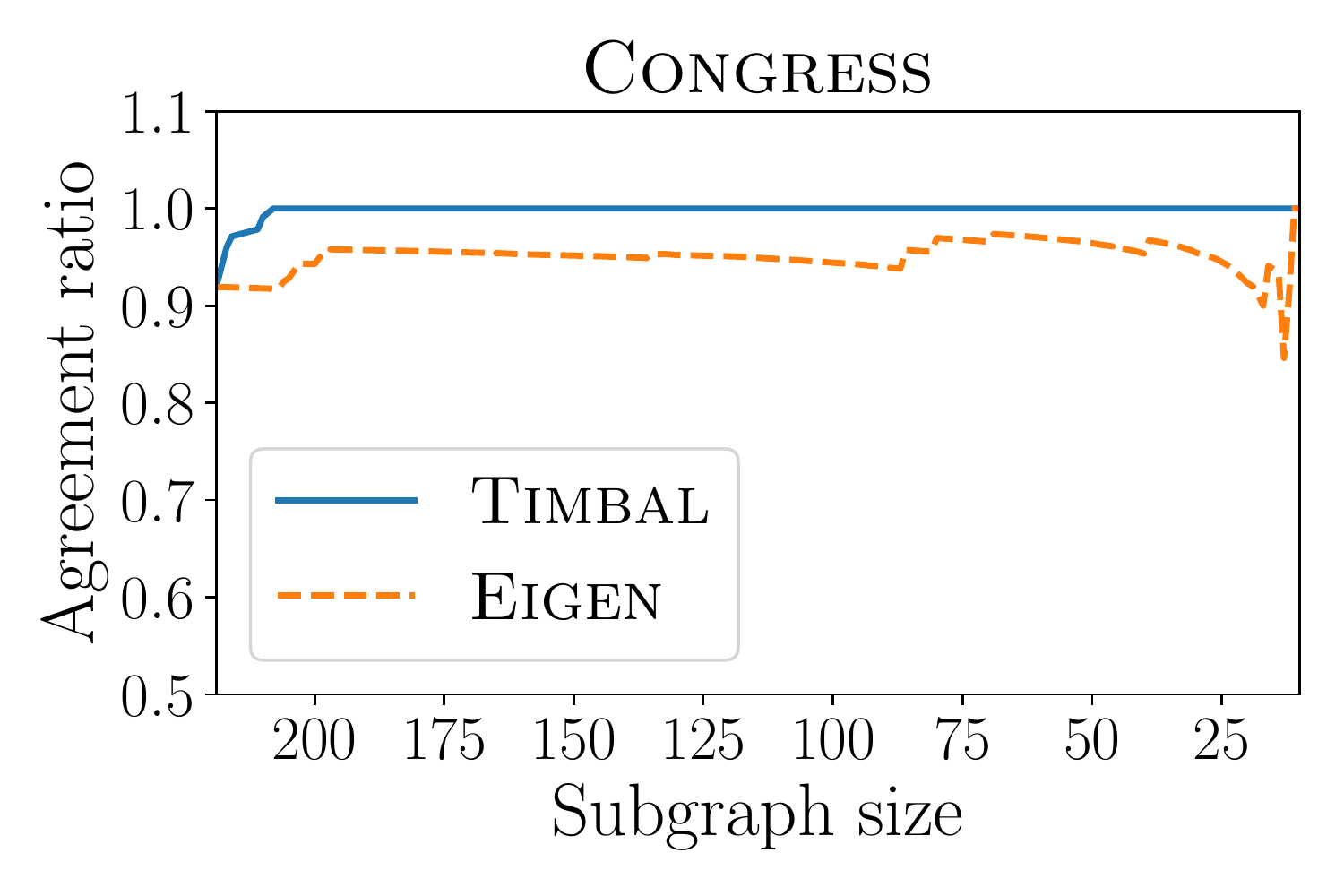} &   \includegraphics[width=.24\textwidth]{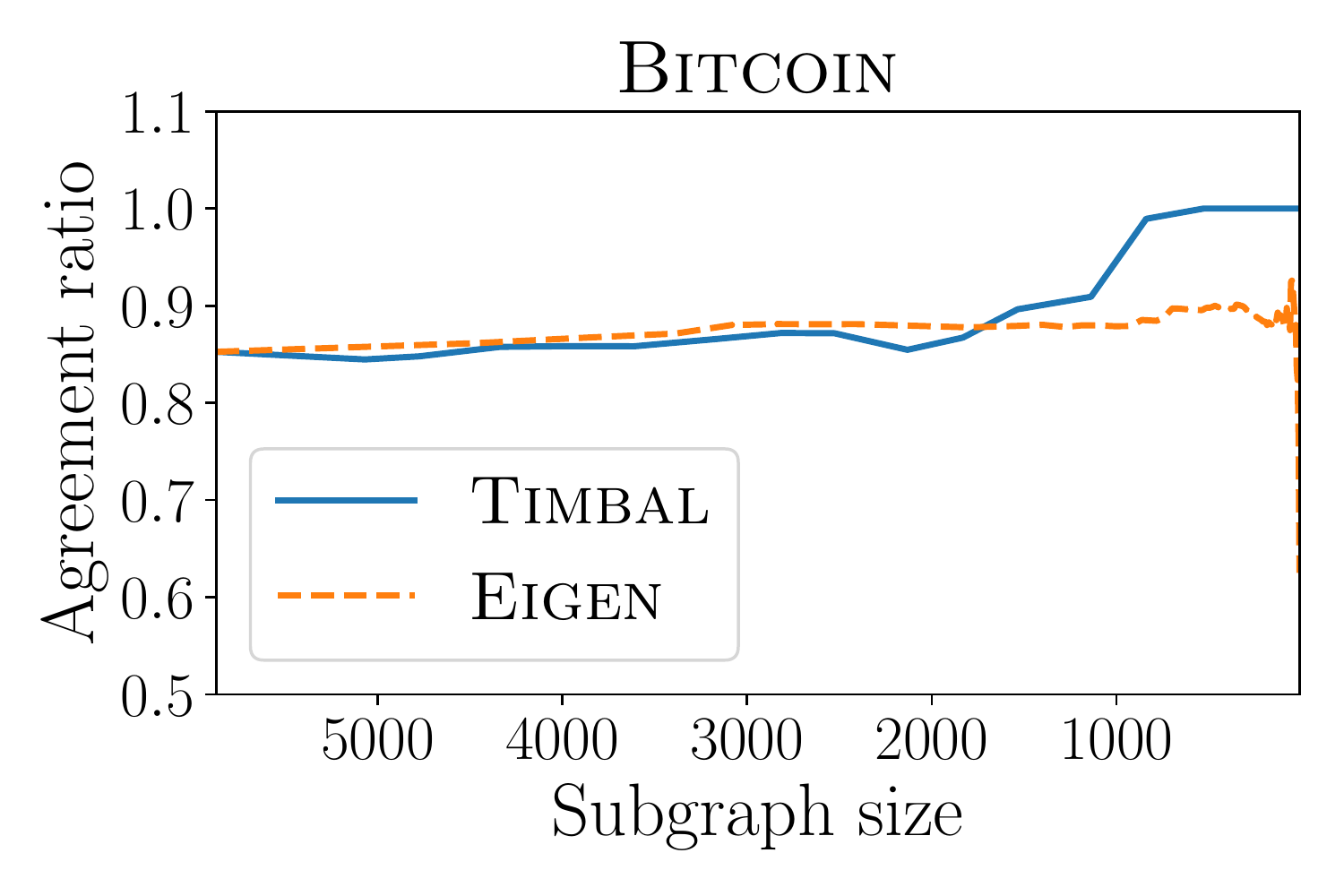} &  \includegraphics[width=.24\textwidth]{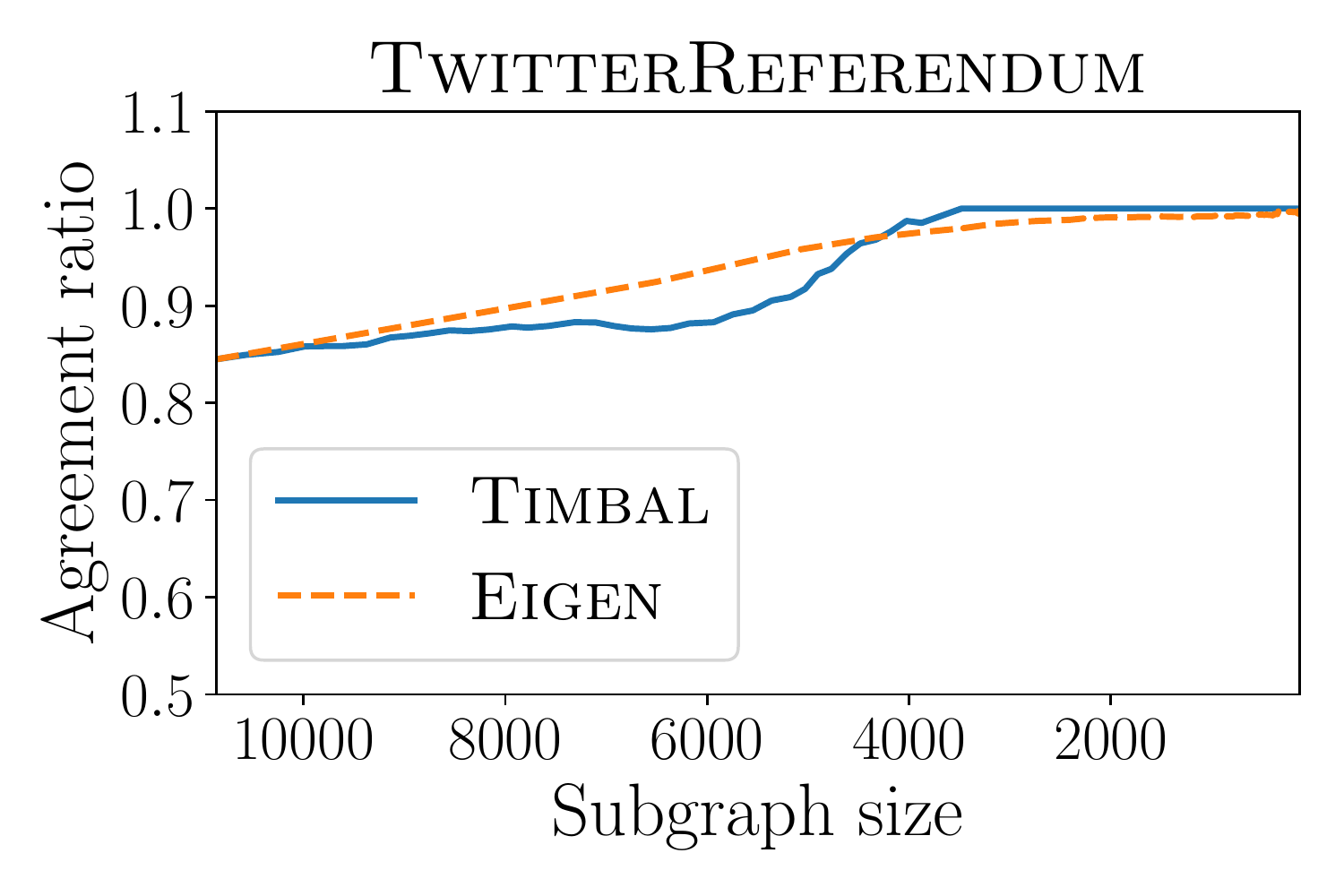} &   \includegraphics[width=.24\textwidth]{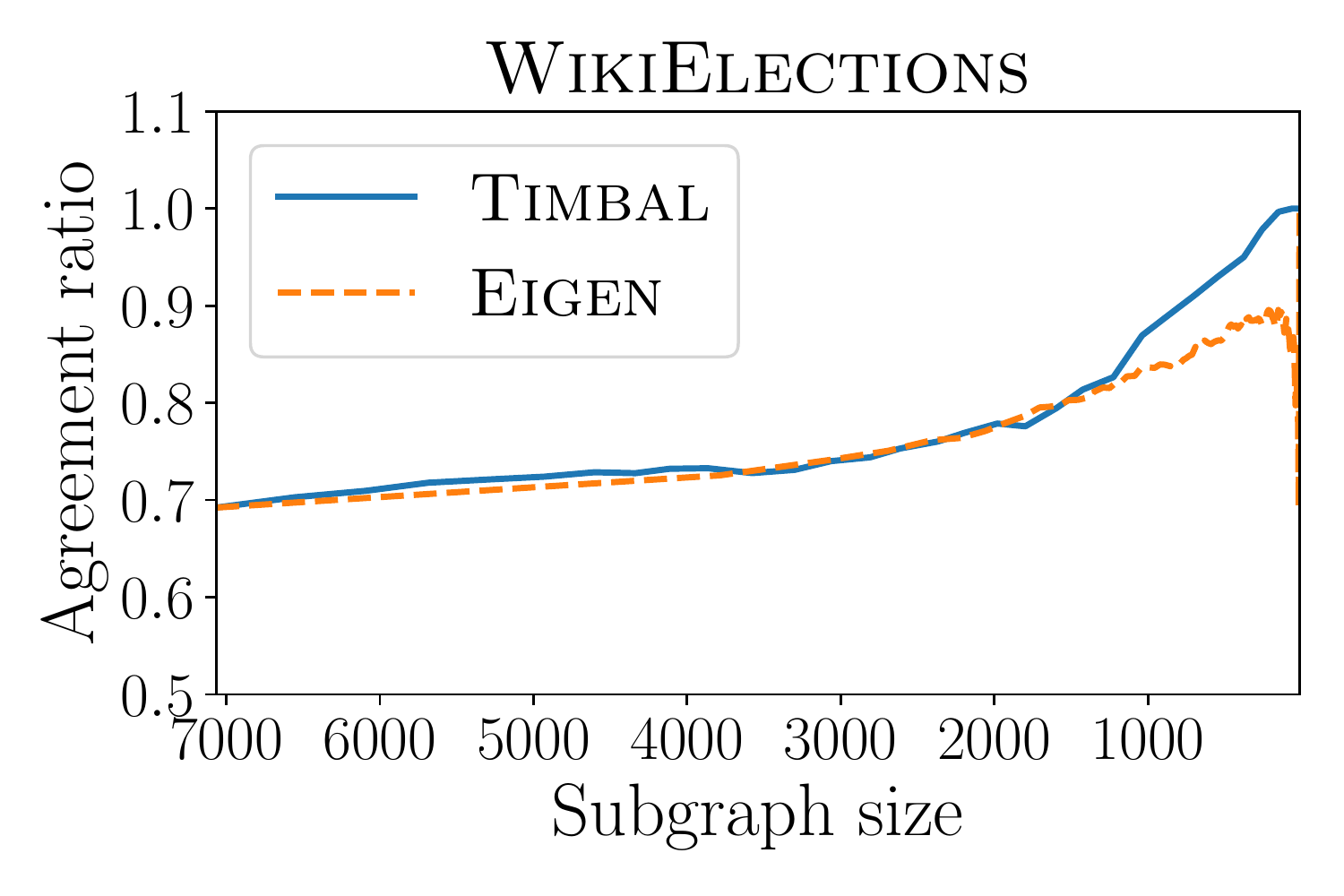}  \\
 \includegraphics[width=.24\textwidth]{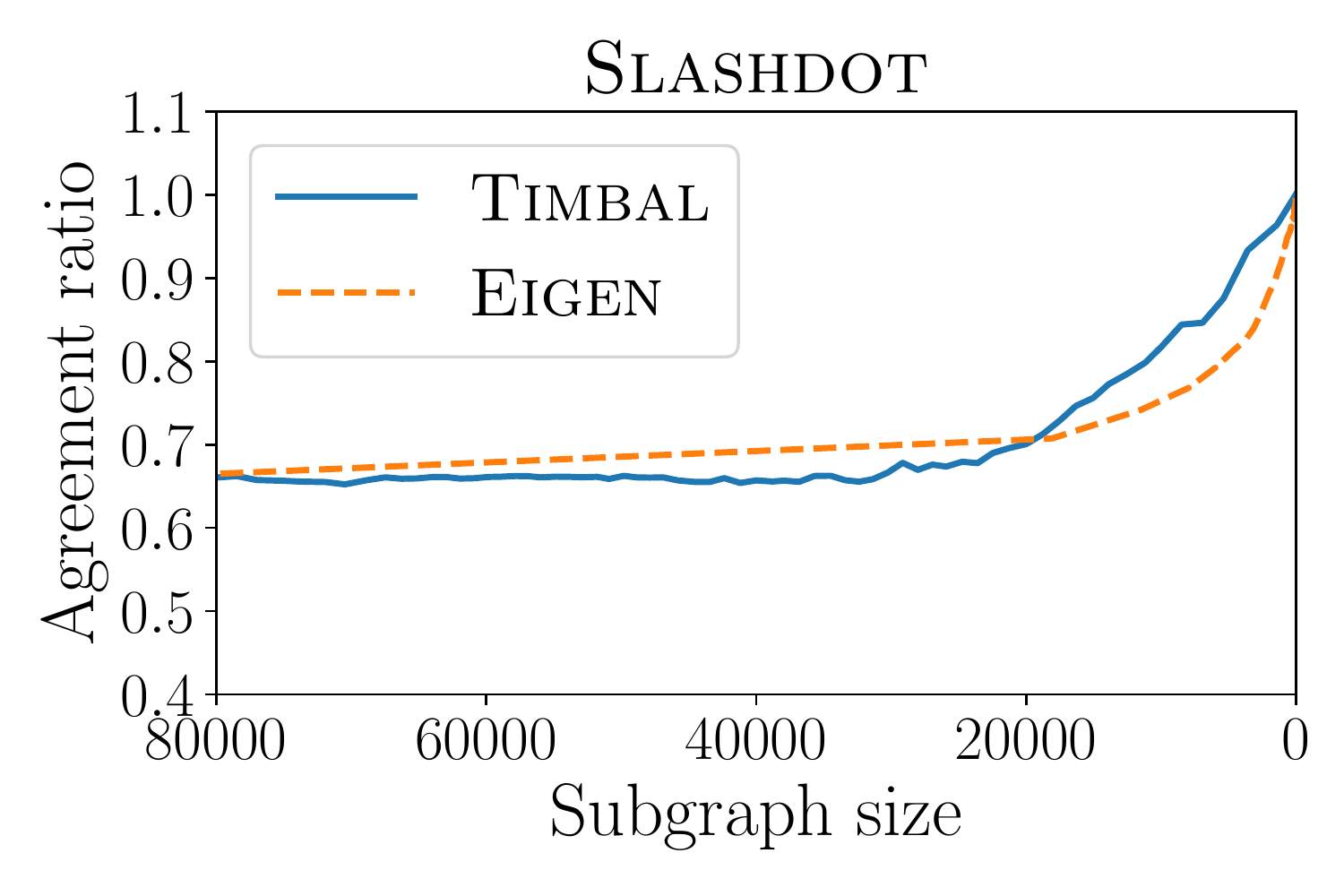} &   \includegraphics[width=.24\textwidth]{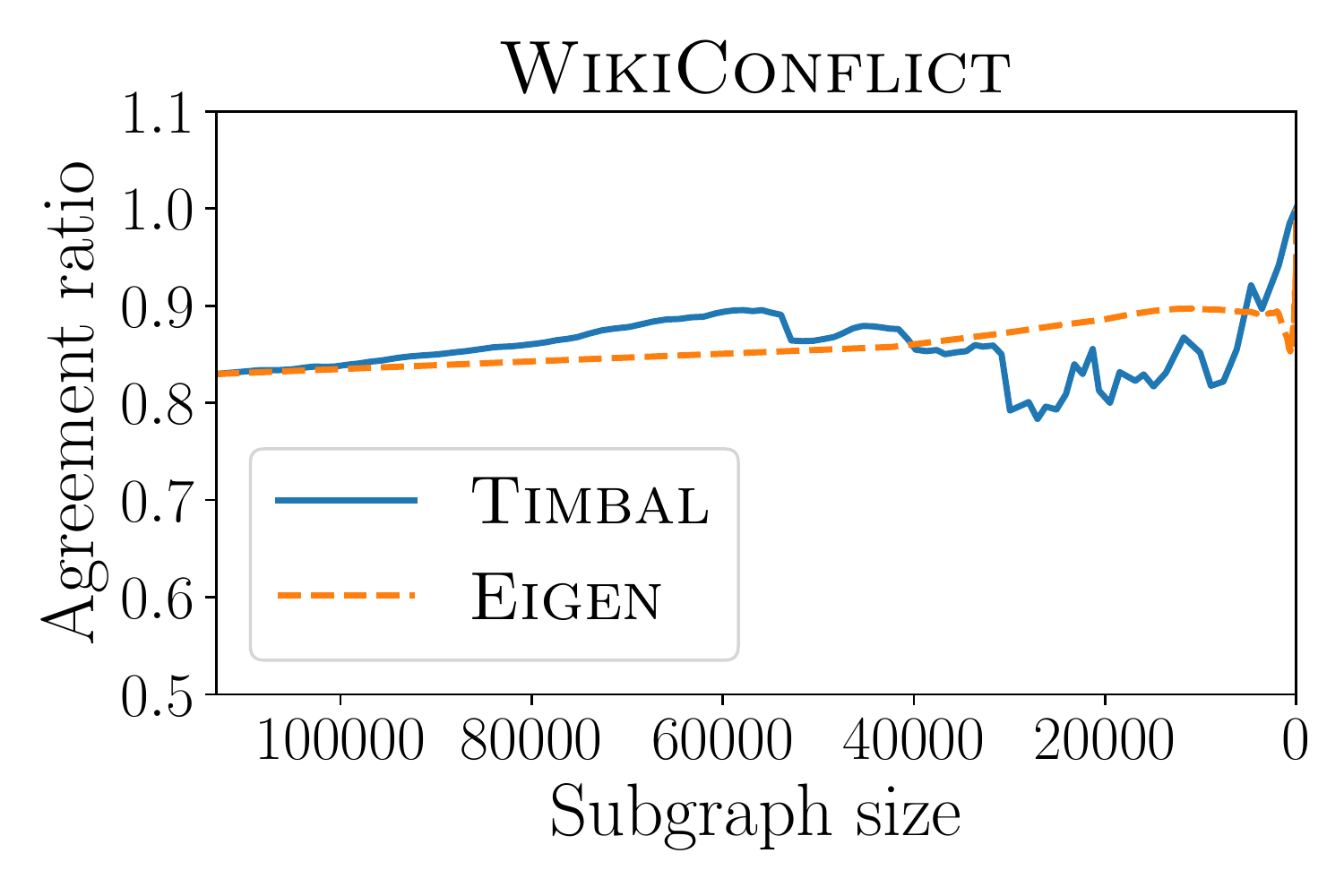} & \includegraphics[width=.24\textwidth]{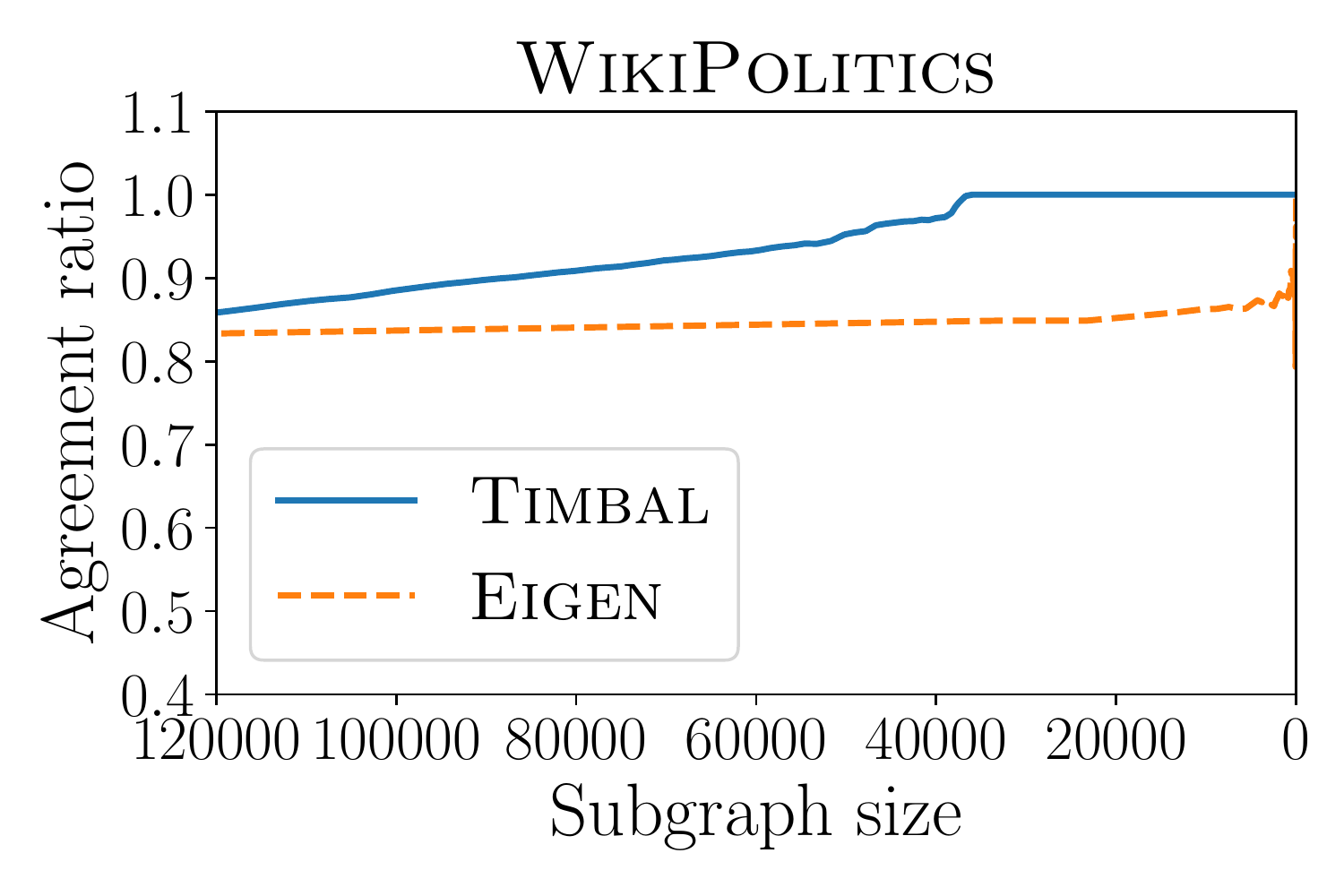} &   \includegraphics[width=.24\textwidth]{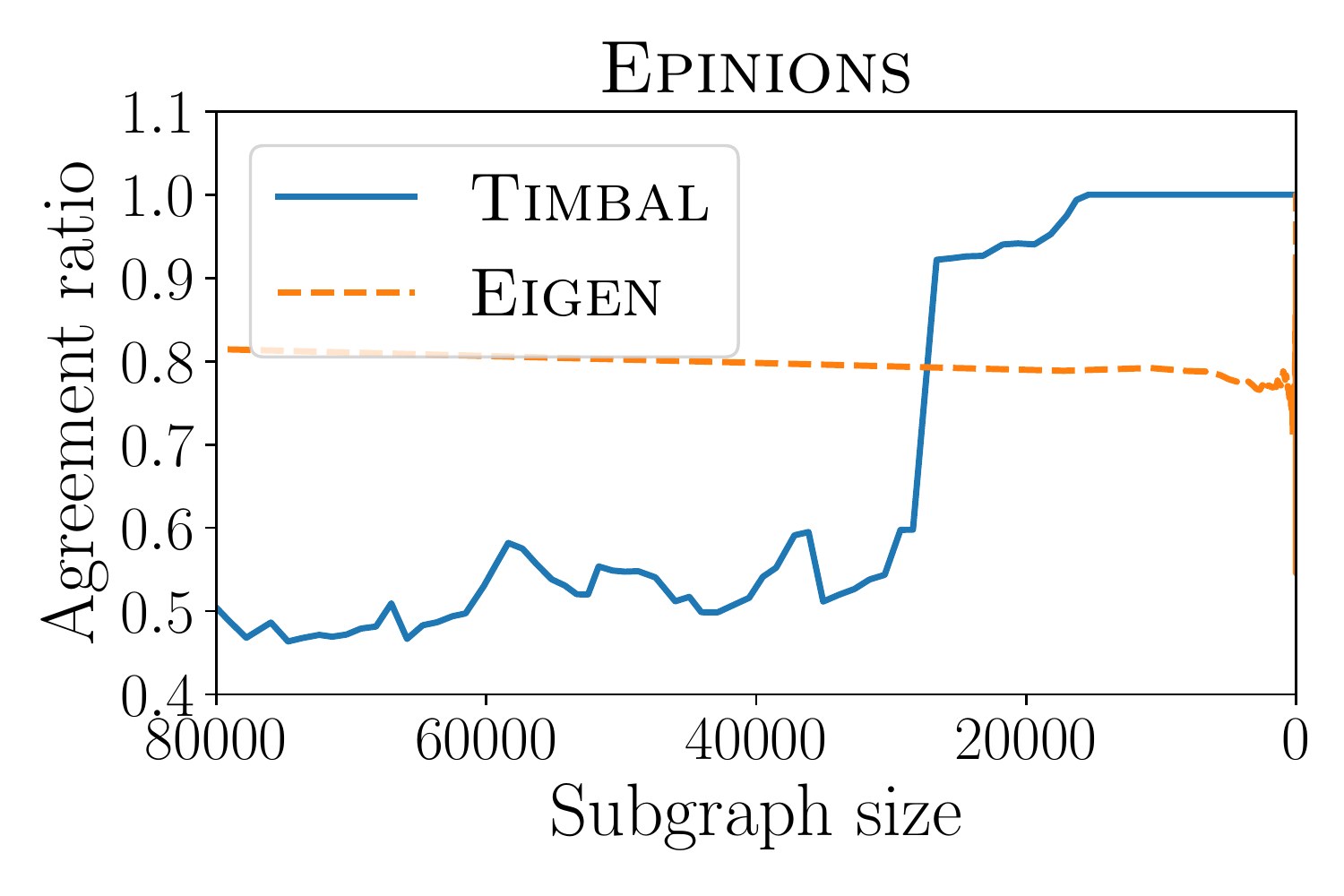}  \\
\end{tabular}
  \vspace{-3mm}
  \caption{Edge agreement ratio of all subgraphs visited by \ouralgo and \eige.}
    \label{fig:visits}
\end{figure*}

\begin{figure*}[t]
  \begin{tabular}{cccc}    
  \includegraphics[width=.24\textwidth]{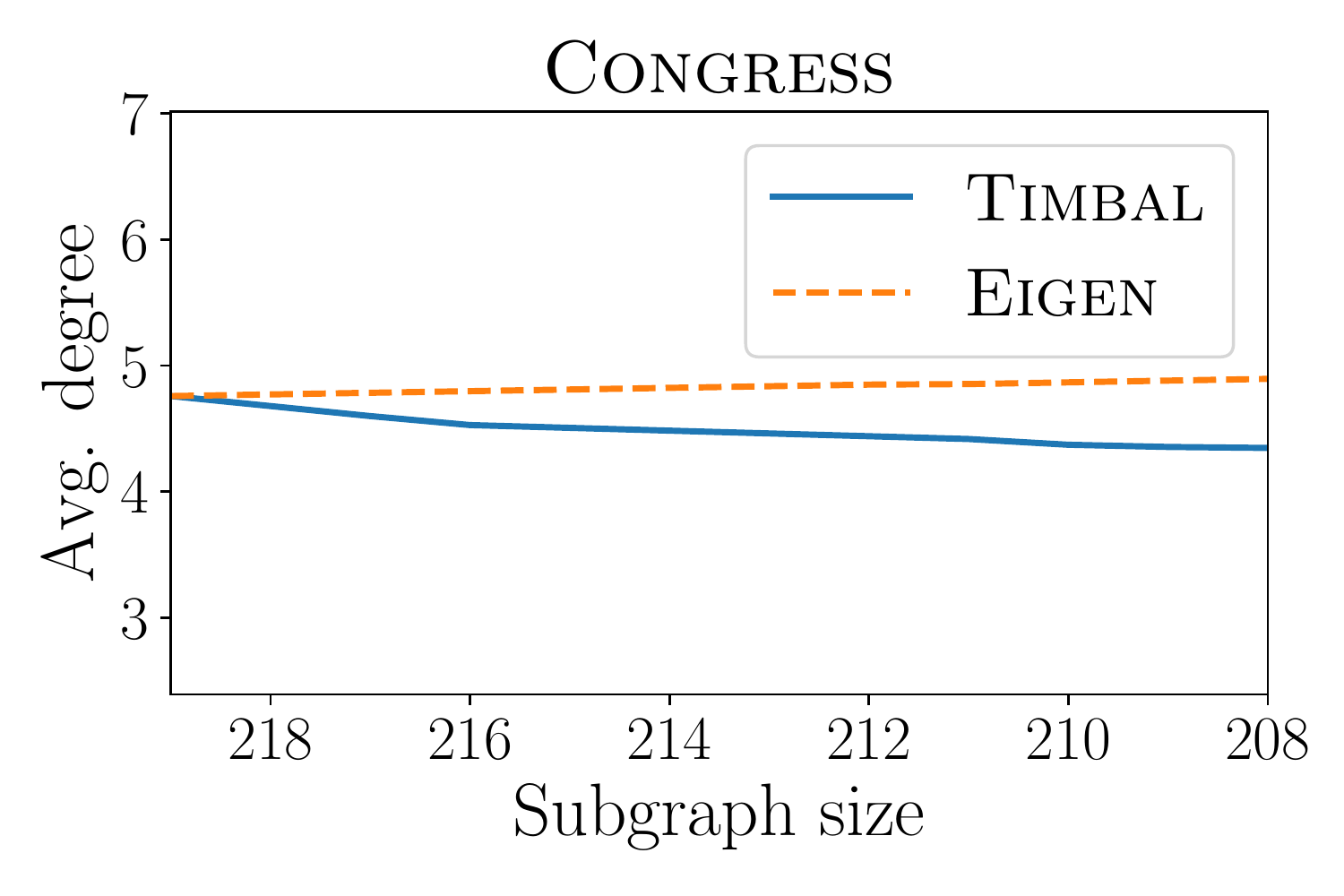} &   \includegraphics[width=.24\textwidth]{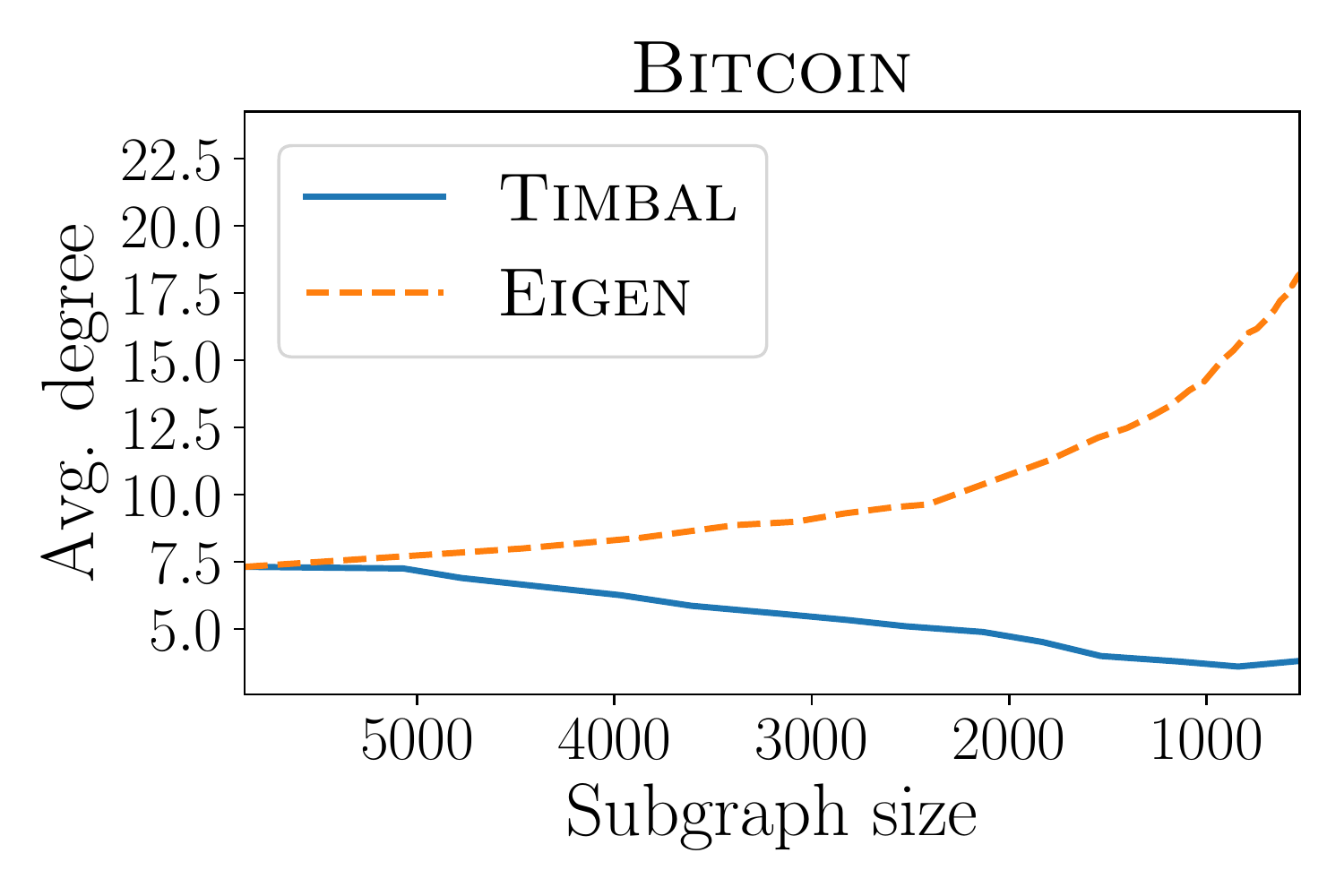} &  \includegraphics[width=.24\textwidth]{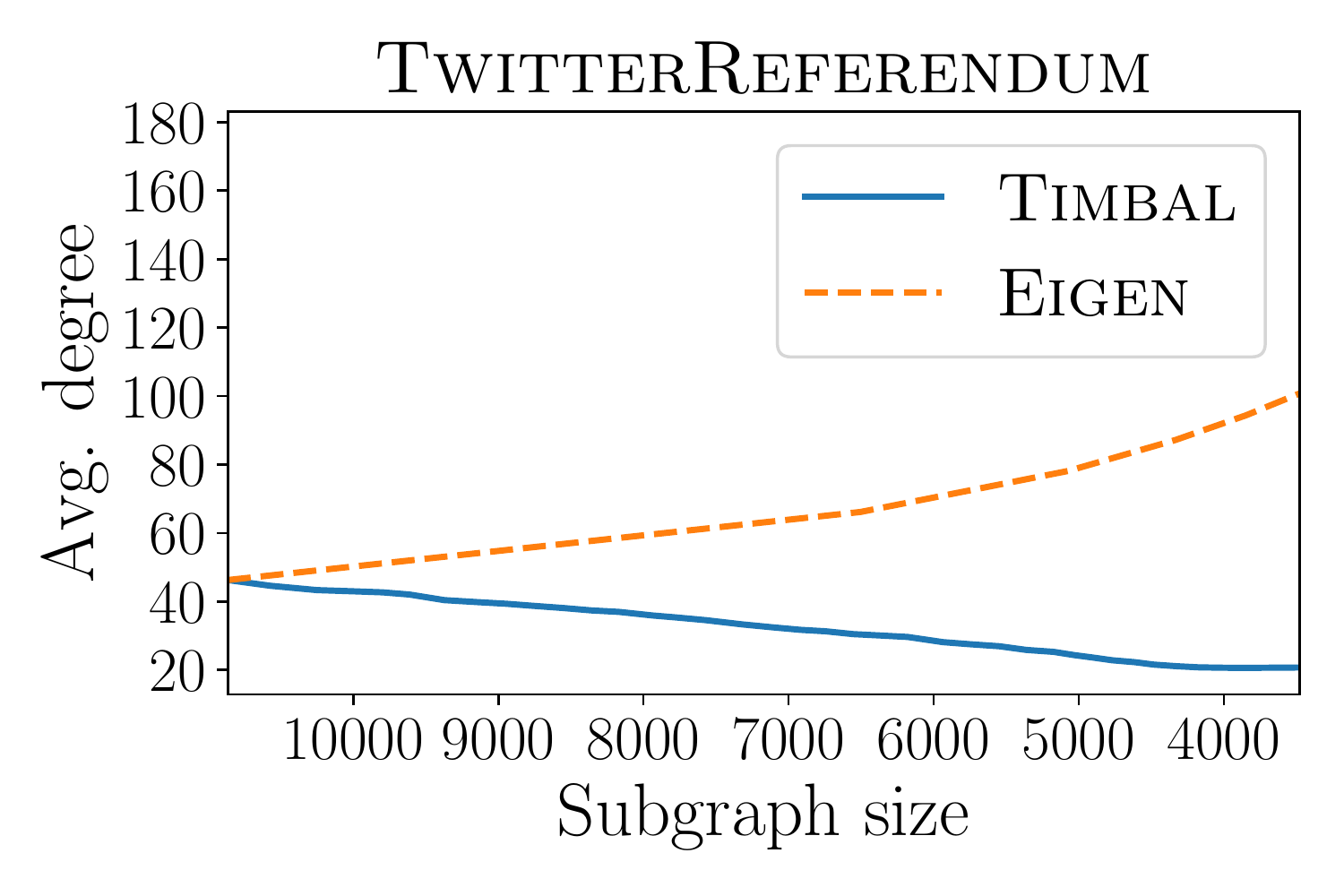} &   \includegraphics[width=.24\textwidth]{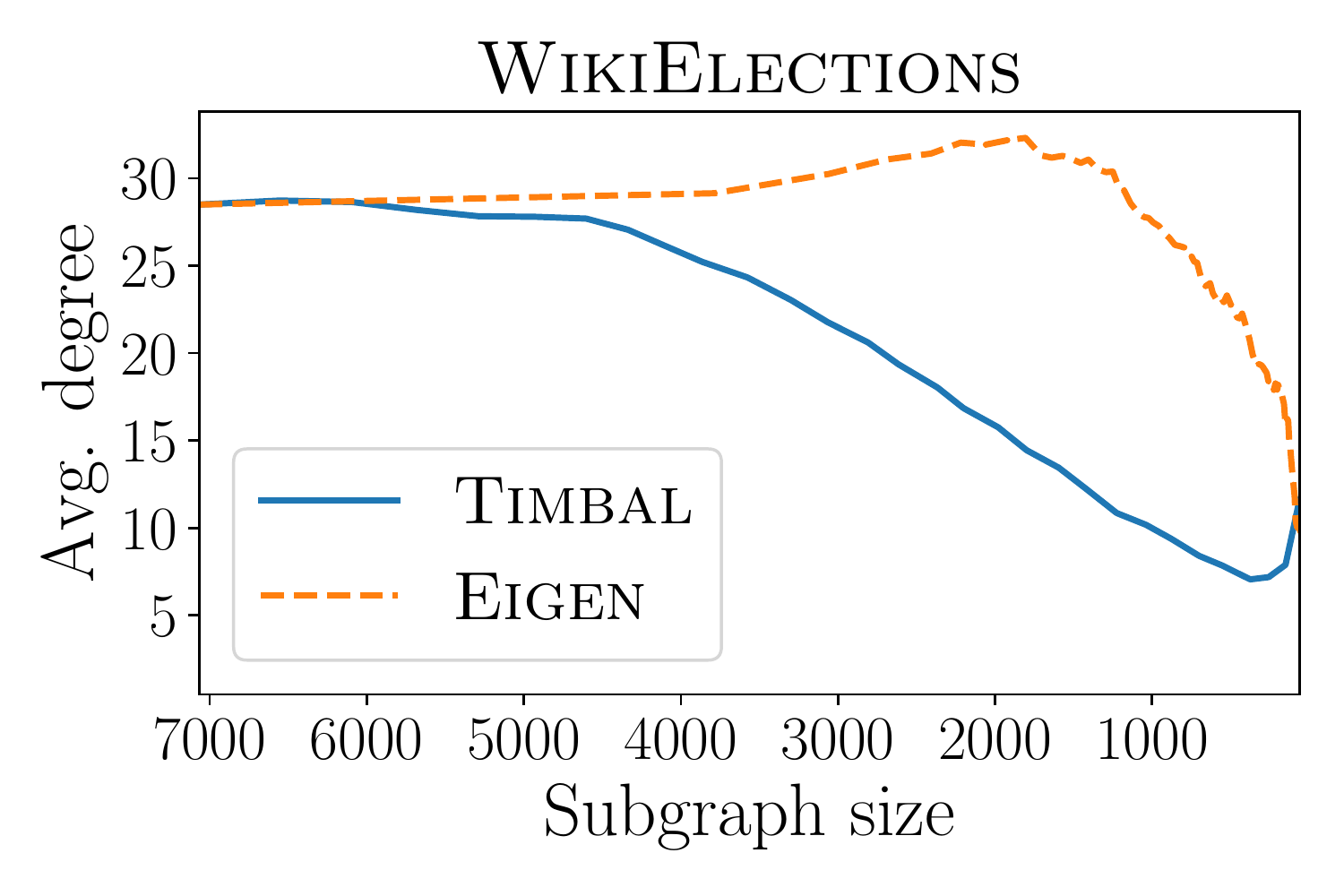}  \\
 \includegraphics[width=.24\textwidth]{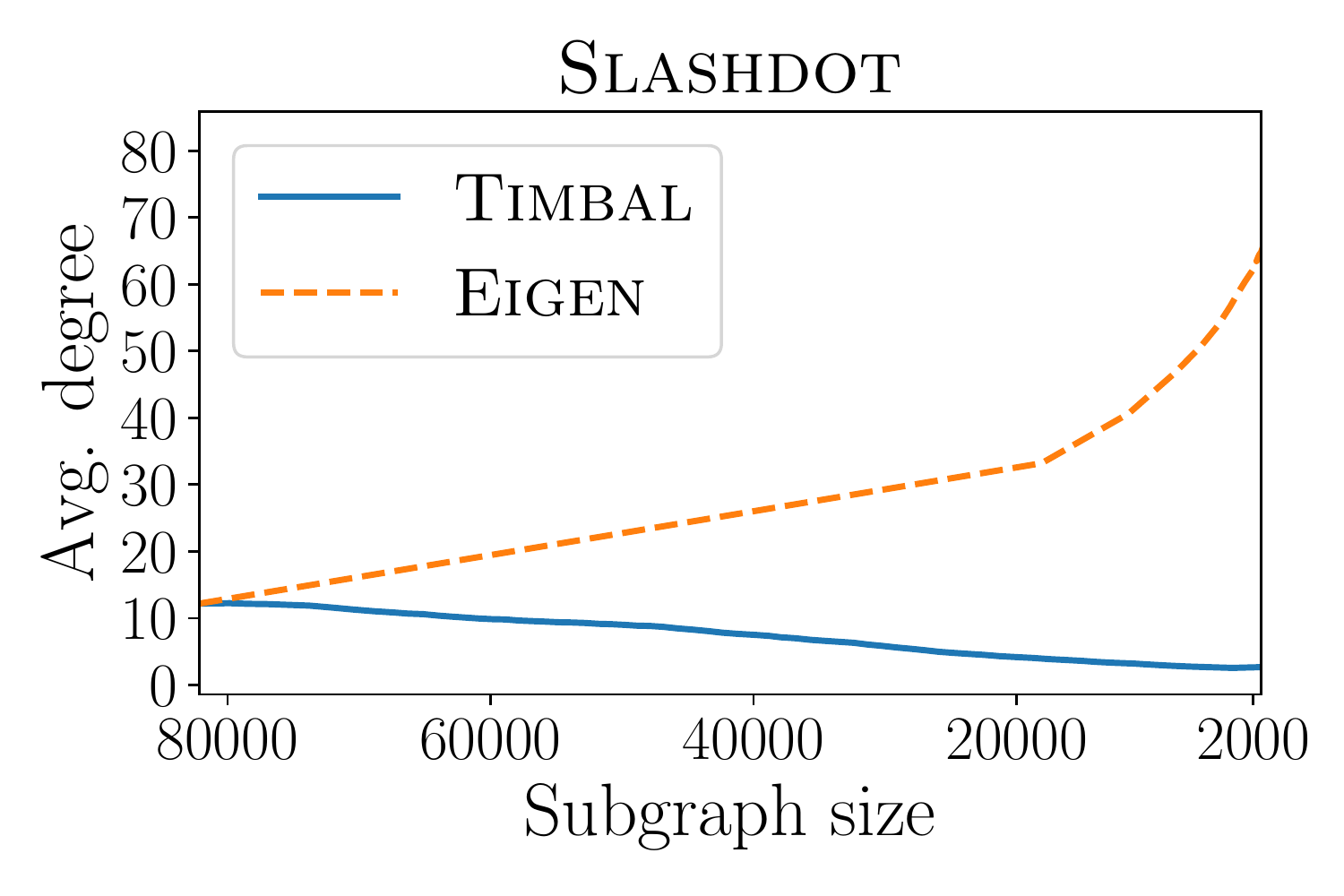} &   \includegraphics[width=.24\textwidth]{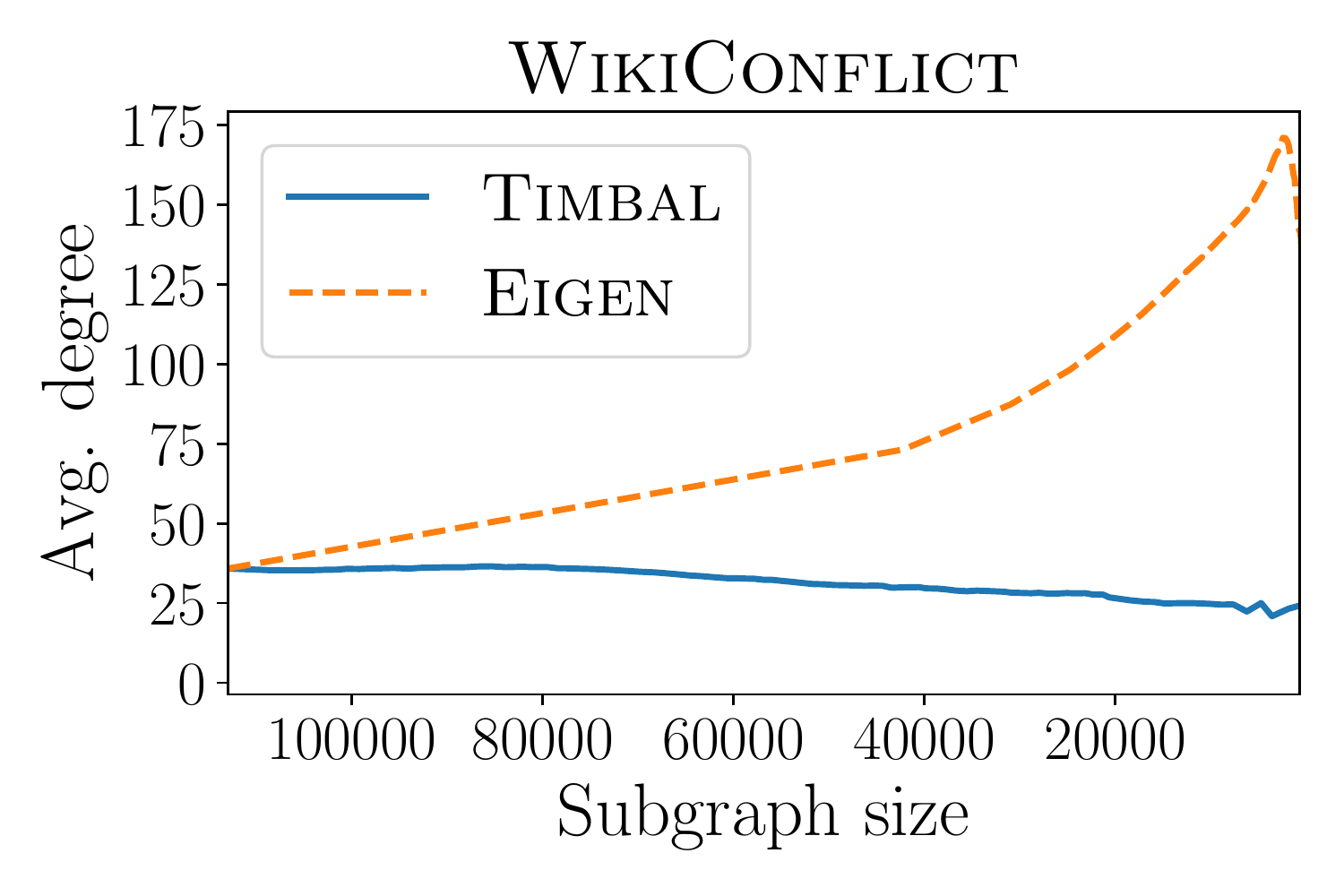} & \includegraphics[width=.24\textwidth]{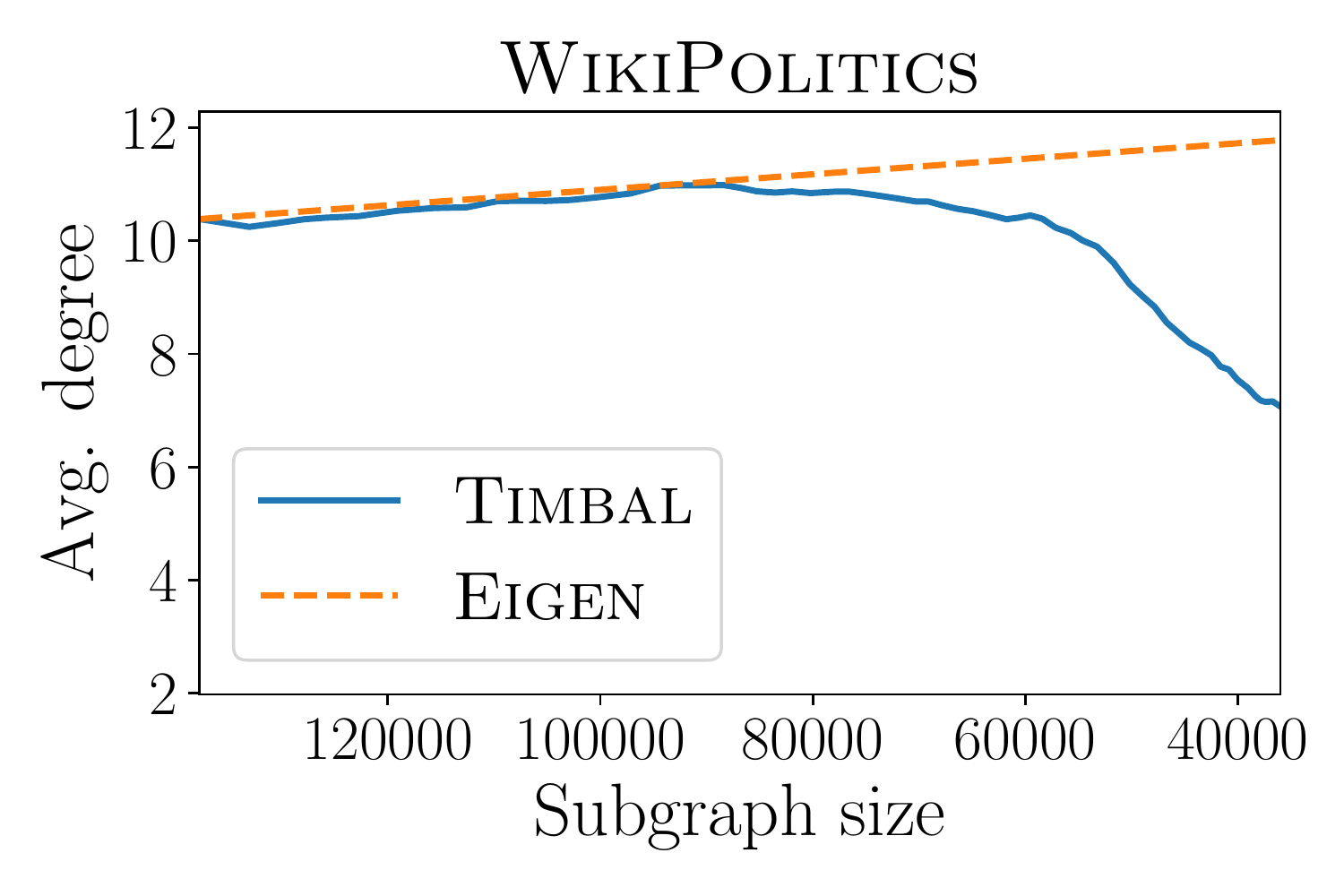} &   \includegraphics[width=.24\textwidth]{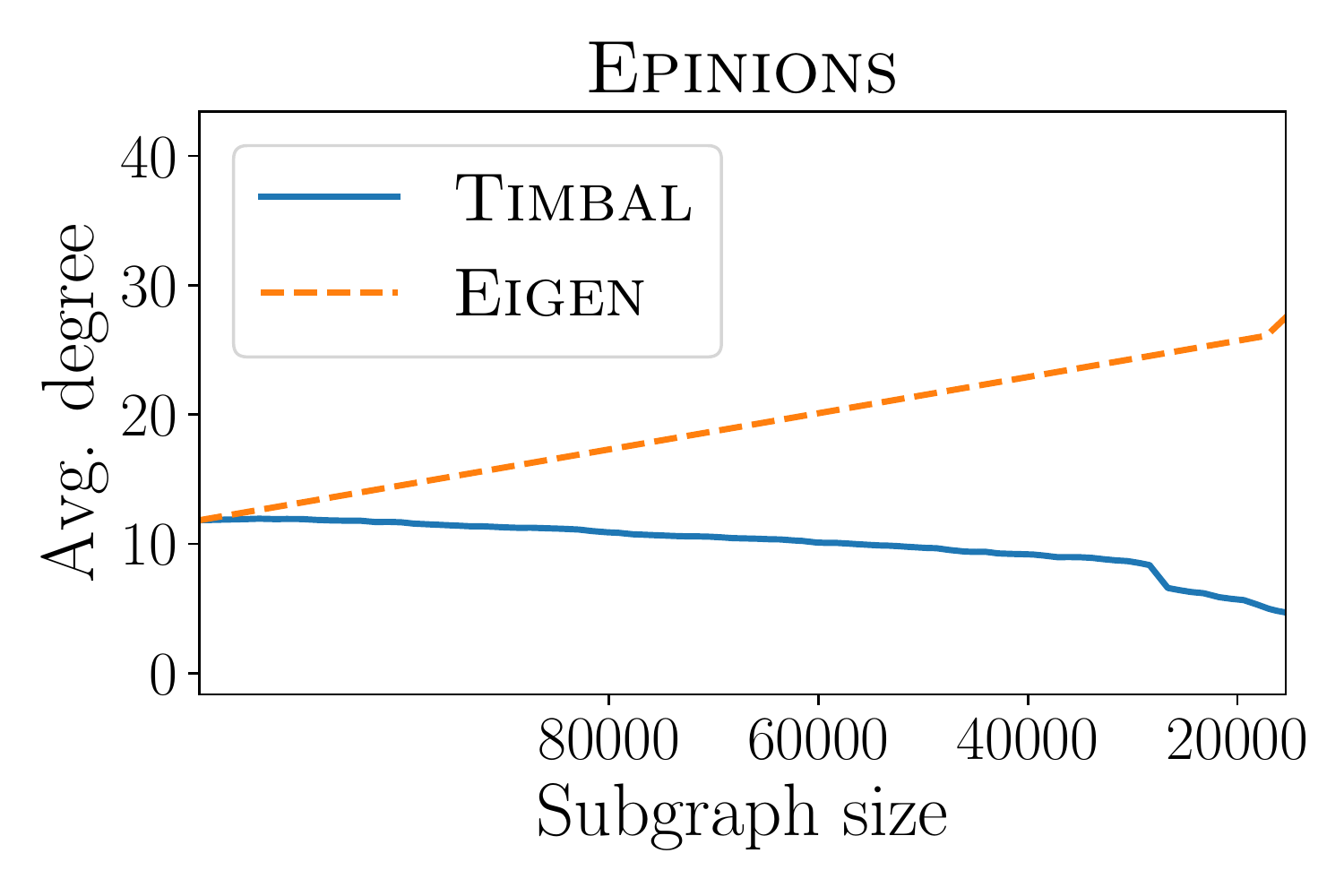}  \\
\end{tabular}
  \vspace{-3mm}
  \caption{Average degree of all subgraphs visited by \ouralgo and \eige.}
    \label{fig:density}
\end{figure*}

\begin{table}[t] 
  \begin{center}
  \begin{small}
    \caption{Mean running times in seconds for each algorithm on the larger datasets, with the corresponding variance reported in the brackets}\vspace{-3mm}
    \label{tab:time}
      \begin{tabular}{l rrrrrrrr}
        \toprule
         & \multicolumn{1}{c}{\slashdot} & \multicolumn{1}{c}{\conflict} & \multicolumn{1}{c}{\politics} & \multicolumn{1}{c}{\epinions} \\ 
        \midrule
        \ouralgo & 117\,(5.85) & 159\,(36.35) & 210\,(33.94) & 244\,(31.11) \\
        \grasp & 59\,(0.87) & 105\,\,\,(3.07) & 154\,(13.27) & 116\,\,\,(0.69) \\
        \ggmz & 318\,(0.20) & 461\,\,\,(0.56) & 528\,\,\,(1.36) & 670\,\,\,(0.32) \\
        \eige & 14 $\qquad$\, & 82 $\qquad$\,\, & 25 $\qquad$\,\, & 48 $\qquad$\,\, \\
        \bottomrule
      \end{tabular}
  \end{small}
  \end{center}
\end{table}

\begin{table}[t] 
  \begin{small}
  \begin{center}
    \caption{Running times in seconds for our algorithm on the artificially augmented datasets. We report the size of each augmented dataset ($|V|$, $|E|$), as well as the size of the solutions ($|V_S|$, $|E_S|$).}\vspace{-3mm}
    \label{tab:scala}
      \begin{tabular}{l rrrrr }
        \toprule
        Dataset & $|V|$ & $|E|$ & Time (s) & $|V_S|$ &  $|E_S|$ \\
        \midrule
        \conflict-1 & 233\,434 & 6.1\,M & 260 &  67\,962 & 718\,455  \\
        \conflict-2 & 350\,151 & 10.1\,M & 431 & 75\,024  & 759\,280 \\
        \conflict-3 & 583\,585 & 18.3\,M & 798 & 99\,506  & 808\,804  \\
        \conflict-4 & 1.05\,M  & 34.7\,M & 2\,059 & 152\,789 &  964\,446 \\
        \epinions-1 & 263\,160 & 2.1\,M & 306 &  1119\,73 & 348\,502  \\
        \epinions-2 & 394\,740 & 3.6\,M &  590 & 153\,419 &  481\,378 \\
        \epinions-3 & 657\,900 & 6.5\,M & 1\,776 & 231\,709 & 695\,614 \\
        \epinions-4 & 1.1\,M & 12\,M & 5\,628 & 385\,478  & 1\,081\,607  \\
        \bottomrule
      \end{tabular}
  \end{center}
  \end{small}
\end{table}

\begin{figure}[t]  
  \includegraphics[width=.9\columnwidth, trim=0cm 1.25cm 0cm .85cm,clip]{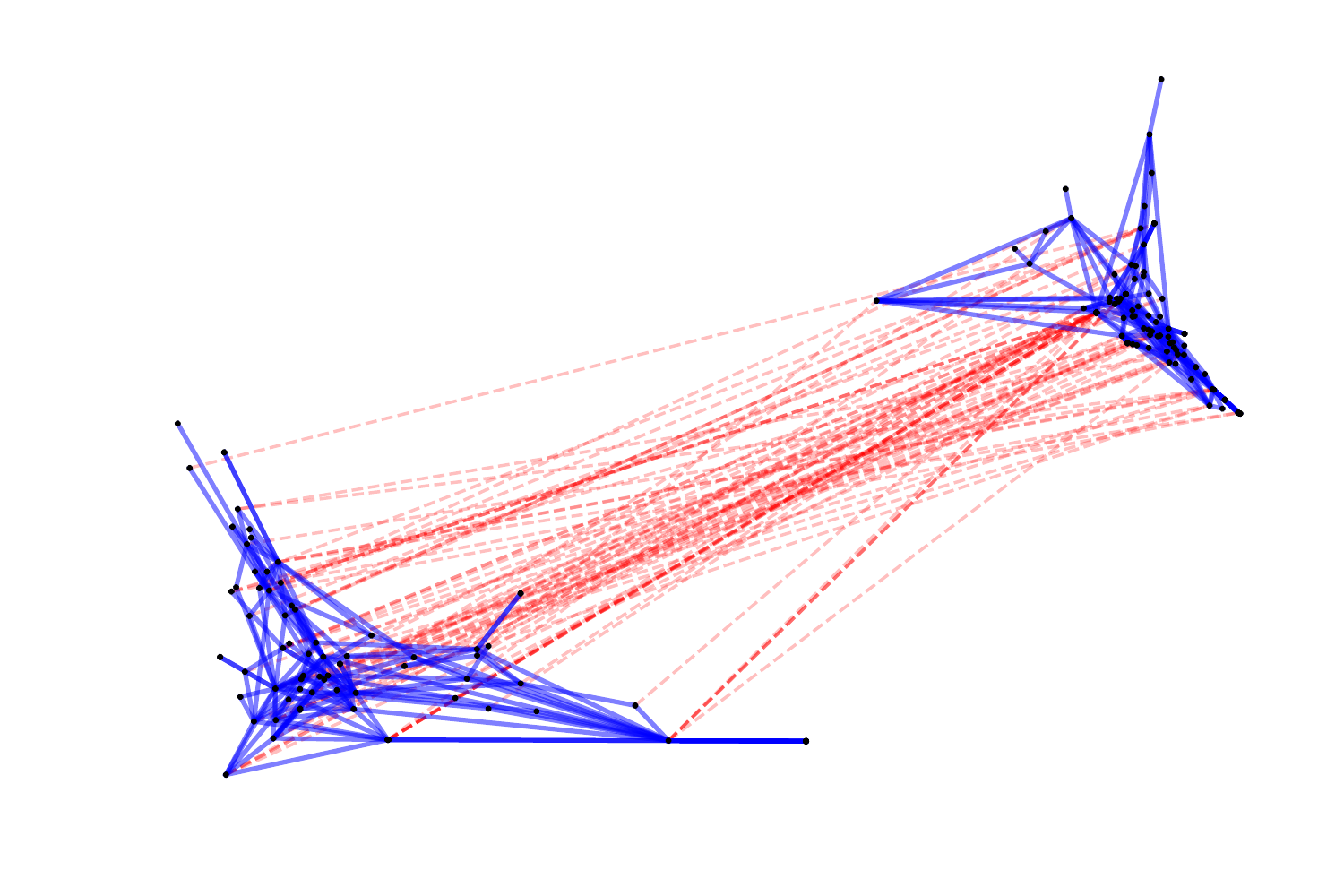}
  \caption{Visualization of the result for \congress. Solid blue edges are positive, while dashed, red ones are negative.}
  \label{fig:congress}
\end{figure}

\begin{figure*}[h!]
   \includegraphics[width=0.8\textwidth, trim=0cm 1.25cm 0cm .5cm,clip]{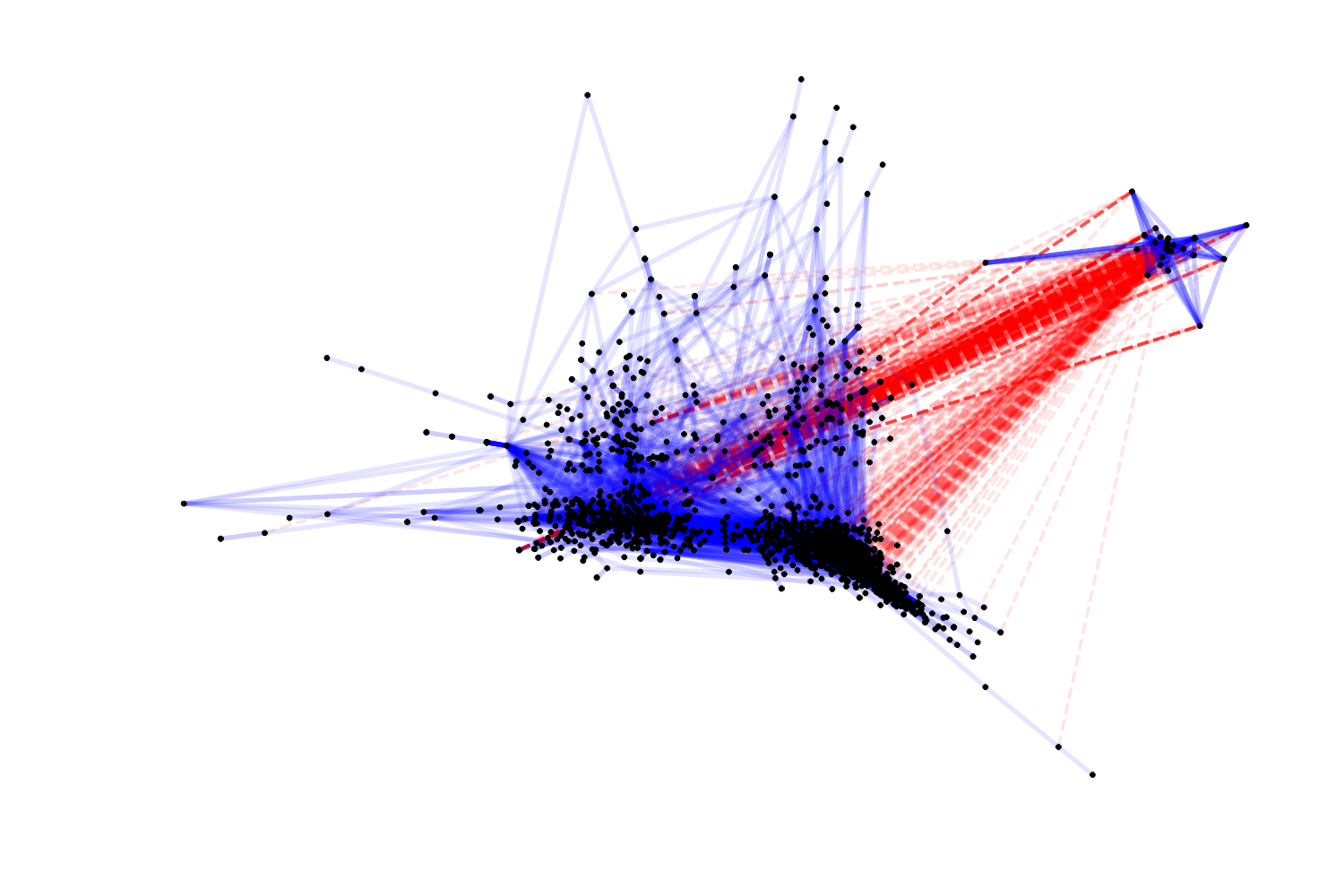}
  \caption{Visualization of the result for \bitcoin. Solid blue edges are positive, while dashed, red ones are negative.}
  \label{fig:bitcoin}
\end{figure*}

\subsection{Trading off balance and graph size}
Even though our method is intended to find balanced subgraphs, in some applications a small number of mistakes might be acceptable if that means we can find a larger, denser subgraph. An advantage of our algorithm is that the first stage produces a sequence of vertices to remove so that the graph becomes increasingly balanced. Therefore, we can inspect the subgraph obtained at every step of the process and keep the one that best suits our purposes.

To evaluate the extent to which we can benefit from this, we inspected all the graphs generated during the first stage of \ouralgo and measured their quality in terms of balance. We compared its performance to \eige, which is the only baseline that can produce a removal sequence.\footnote{\grasp inspects the vertices at a random order, and \ggmz finds an independent set, the complement of which does not have an intrinsic order.} Figure \ref{fig:visits} shows the results. For each visited graph, we indicate its size on the x axis and plot its corresponding \textit{edge agreement ratio}, which we define as follows. Given a graph $\graph$ with adjacency matrix $A$, let $\vec{x}$ be the indicator vector obtained by taking the signs of the dominant eigenvector of $A$. Then the edge agreement ratio of $\graph$ is
\begin{align}
  \frac{\vec{x}^TA\vec{x}}{\|A\|_F^2},
\end{align}
where $\|A\|_F$ denotes the Frobenius norm. Thus, the edge agreement ratio quantifies the proportion of edges in the graph that agree with the eigenvector-based partition. In a balanced graph, this quantity is 1. Note that in most cases, \ouralgo provides significantly better subgraphs than \eige throughout the entire process.
 However, \ouralgo is able to locate a large balanced subgraph at some step, whereas \eige cannot. Recall that all plots discussed in this section correspond to the first stage of \ouralgo.

Additionally, Figure~\ref{fig:visits} provides some insights about the behaviour of \ouralgo. For instance, in various cases the edge agreement ratio increases sharply at some point. This suggests that once the graph is close to balance, our method can quickly find which vertices to remove to achieve perfect balance.

Figure~\ref{fig:density} shows analogous results, but plotting average degree instead of edge agreement ratio. As expected, \eige achieves higher density than \ouralgo. This can be explained by the fact that \eige can be seen as optimizing the following objective \cite{bonchi2019discovering}:
\[
\max_{\vec{x}\in \{-1,0,1\}^n}\frac{\vec{x}^TA\vec{x}}{\vec{x}^T\vec{x}}.
  \]
However, this comes at a noticeable cost in edge agreement ratio. \ouralgo, on the contrary, tries to achieve balance, which results in reduced density. Nevertheless, Figure~\ref{fig:density} also illustrates how one can trade off balance for density in the different stages of the execution of \ouralgo. If desired, one can take the graph visited at some iteration of the first stage of the algorithm and then execute the second stage, adding the vertices that agree with the current best partition, even if perfect balance cannot be achieved.

\subsection{Running times}
We report the running times of the algorithms on the larger datasets in Table \ref{tab:time}. The experiments are executed on a machine equipped with an Intel Xeon E5-2670 with 24 cores and 256 GB of RAM. We run each algorithm ten times on each dataset and report averages, as well as the variance of the running times. \ouralgo is executed removing 100 vertices at each iteration.

We only report results for the larger subgraphs, as the running times are similar for all algorithms on the rest. 
As expected, \eige is the fastest method, as its running time is dominated by the computation of the dominant eigenvector of the adjacency matrix of the input graph. However, its results in terms of balanced subgraph size are poor. Among the competitive methods, \ouralgo arguably achieves the best combination of quality and running time ratio.

\subsection{Scalability}
In order to assess the scalability of our algorithm, we augment two of the larger datasets 
(i.e., \conflict\ and \epinions)
by artificially injecting vertices with a number of randomly-connected edges equal to the average degree of the original network, while maintaining $\rho_-$ (i.e., the ratio of negative edges). The largest obtained datasets are comprised of about 1.1 million vertices and 34 million edges in the case of \conflict, and 1.1 million vertices and 12 million edges in the case of \epinions.


We execute our algorithm on these datasets. For each dataset, we make five iterations and report the average running time, as well as the size of the best solution found during the iterations. To improve running times, we dynamically set the number of vertices to remove in the $i$-th iteration to $n_i/100$, where $n_i$ is the size of the subgraph at iteration $i$. Notice that the graphs found by the algorithm are large. The results are shown in Table~\ref{tab:scala}.

\subsection{Examples}

In order to gain insight on the results of our methods, we plot the discovered balanced subgraphs for two of the datasets. The graphs discussed in this section are those found in the first stage of \ouralgo.

\spara{\congress:} A balanced subgraph comprised of 208 vertices was found. The first subset in the partition had 95 vertices and 173 edges, while the second had 113 vertices and 199 edges. There are 80 edges between the two sets. Note that this dataset contains 218 vertices in total. This result reveals that the individuals represented by the vertices of this graph are very polarized, as they can be perfectly partitioned by removing just nine of them.
The result is depicted in Figure~\ref{fig:congress}.

\spara{\bitcoin:} A balanced subgraph comprised of 3254 vertices was found. The first subset in the partition had 2986 vertices and 6504 edges, while the second had 268 vertices and 173 edges. There are 807 edges between the two sets. In this case, notice that the second set is more densely connected to the other set than within itself. This reveals that the graph contains a large community of affine users towards which many users feel negatively. The result is depicted in Figure~\ref{fig:bitcoin}.

\section{Conclusions and future work}
\label{sec:conclusion}

In this paper we presented a novel, efficient algorithm for finding large balanced subgraphs in signed networks. By relying on signed spectral theory and a novel bound for perturbations of the graph Laplacian, 
we derived an efficient implementation. Through a wide variety of experiments on real-world and synthetic data we showed that our method achieves better results, in shorter or comparable time, than state-of-the-art methods. We tested scalability on graphs of up to 34\,M edges.

Our work leaves several open avenues of further inquiry. First of all, it would be interesting to study the problem of optimally choosing a constrained subset of vertices to decrease the smallest Laplacian eigenvalue as much as possible. To the best of our knowledge, this problem has not been considered before in the literature. Second, we would like to carry out a thorough analysis of the impact of the number of removed vertices on the quality of the results and running time of the algorithm. A more interesting question to answer is whether we can determine this number optimally at each iteration. Finally, can we further improve the scalability of our algorithm? How efficiently can we find a large balanced subgraph in massive networks?

\vfill
\spara{Acknowledgments.}
This work was supported by 
three Academy of Finland projects (286211, 313927, 317085), 
the EC H2020RIA project ``SoBigData++'' (871042), 
and
the Wallenberg AI, Autonomous Systems and Software Program (WASP) funded by Knut and Alice Wallenberg Foundation.

\bibliographystyle{ACM-Reference-Format}
\flushend
\bibliography{citations}

\end{document}